\theoremstyle{definition}
\newtheorem{theorem}{Theorem}
\newtheorem{lemma}[theorem]{Lemma}
\newtheorem*{claim*}{Claim}
\theoremstyle{remark}
\newcommand{\R}{\mathbb{R}}
\newcommand{\C}{\mathbb{C}}
\DeclareMathOperator{\re}{Re}
\DeclareMathOperator{\im}{Im}
\DeclareMathOperator{\Tr}{Tr}
\DeclareMathOperator{\sinc}{sinc}
\newcommand{\proj}{\hat{\Pi}}
\newcommand{\idop}{\hat{\mathds{1}}}
\newcommand{\diff}{\mathrm{d}}
\newcommand{\tildos}{\widetilde{\mathcal{N}}}
\newcommand{\kraus}[2]{\hat{\mathcal{K}}_{#1}(#2)}
\newcommand{\krausd}[2]{\hat{\mathcal{K}}_{#1}^\dagger(#2)}
\newcommand{\scr}{\text{scr}}
\newcommand{\doseff}{\mathcal{N}_{\beta}}
\newcommand{\tildoseff}{\widetilde{\mathcal{N}}_{\beta}}
\newcommand{\exprate}{\Gamma}   %For use in supplement as the l-independent part of the double-exp decay rate, with thermodynamic connections
\newcommand{\expratemain}{\Lambda}    %For use in exp(-\exprate exp(t)), without normalization constants
\newcommand{\exprateparam}{\lambda}
\newcommand{\tcomp}{\mathbf{t}}
\newcommand{\texcept}{W_{\text{ex}}}
\newcommand{\rexcept}{V}
\newcommand{\cbeta}{c\beta}
\newcommand\numberthis{\stepcounter{equation}\tag{\theequation}}
\begin{document}

\title{Proof of a Universal Speed Limit on Fast Scrambling in Quantum Systems}
\author{Amit Vikram}
\affiliation{Joint Quantum Institute and Department of Physics, University of Maryland, College Park, MD 20742, USA}
\author{Laura Shou}
\affiliation{Joint Quantum Institute and Department of Physics, University of Maryland, College Park, MD 20742, USA}
\affiliation{Condensed Matter Theory Center, Department of Physics, University of Maryland, College Park, MD 20742, USA}
\author{Victor Galitski}
\affiliation{Joint Quantum Institute and Department of Physics, University of Maryland, College Park, MD 20742, USA}

%We show that the fast scrambling time is universally $\ln \ln $ (inverse relaxation value) based on a refinement of energy-time uncertainty.

\begin{abstract}
We prove that the time required for sustained information scrambling in any Hamiltonian quantum system is universally at least logarithmic in the entanglement entropy of scrambled states. This addresses two foundational problems in nonequilibrium quantum dynamics. (1)~It sets the earliest possible time for the applicability of equilibrium statistical mechanics in a quantum system coupled to a bath at a finite temperature. (2)~It proves a version of the fast scrambling conjecture, originally motivated in models associated with black holes, as a fundamental property of quantum mechanics itself. Our result builds on a refinement of the energy-time uncertainty principle in terms of the infinite temperature spectral form factor in quantum chaos. We generalize this formulation to arbitrary initial states of the bath, including finite temperature states, by mapping Hamiltonian dynamics with any initial state to nonunitary dynamics at infinite temperature. A regularized spectral form factor emerges naturally from this procedure, whose decay is universally constrained by analyticity in complex time. This establishes an exact speed limit on information scrambling by the most general quantum mechanical Hamiltonian, without any restrictions on locality or the nature of interactions.
\end{abstract}

\maketitle

%\textcolor{gray}{Slightly more technical intro than conventional, but seems to be the most natural way to describe fast scrambling.}

\textit{Introduction}--- Consider a quantum mechanical system with Hilbert space $\mathcal{H}_S$, e.g., a collection of $N_S$ qubits with dimension $D_S = 2^{N_S}$, that is initially in a pure basis state $\lvert k\rangle_S$ (which we call the computational basis). At the time $t=0$, we couple $\mathcal{H}_S$ to an external system (say, of $N_E$ qubits and dimension $D_E = 2^{N_E}$) in a state $\hat{\rho}_{\beta E}$ (e.g., a thermal state with inverse temperature $\beta$), and allow their mutual interactions to drive time evolution in the combined system. The initial state of the overall system $\mathcal{H}_S \otimes \mathcal{H}_E$ of dimension $D = D_S D_E$ is given by the density operator:
\begin{equation}
    \hat{\rho}_k(0) = \lvert k\rangle_S\langle k\rvert \otimes \hat{\rho}_{\beta E}.
    \label{eq:initstatedef}
\end{equation}
Generically, information about the initial state of $\mathcal{H}_S$ is ``scrambled'' by interactions, being lost to $\mathcal{H}_E$ through the generation of entanglement, and not entirely recoverable from $\mathcal{H}_S$ alone. To examine to what extent any basis $\lvert \alpha\rangle_S$ of the subsystem $\mathcal{H}_S$ retains this information, we probe the system with measurements represented by the orthonormal projectors
\begin{equation}
    \proj_{\alpha} = \lvert \alpha\rangle_S\langle \alpha\rvert \otimes \idop_E.
    \label{eq:projdef}
\end{equation}

The above is a minimal description of quantum information scrambling that has far-reaching connections to statistical mechanics. Most directly, it is the prototypical setting for the nonequilibrium process of thermalization~\cite{vonNeumannThermalization, tumulka_CT, CanonicalTypicalityPSW, NormalTypicality, Nandkishore, DAlessio2016, Borgonovi2016, EisertShortReview, MBLThermalization, GongHamazakiNoneqbBounds}, of the system $\mathcal{H}_S$ via interactions with a thermal reservoir $\mathcal{H}_E$, which (if it occurs) is ultimately responsible for the validity of equilibrium statistical mechanics in $\mathcal{H}_S$. In a separate but related context, the same model is also believed to capture some essential aspects of information loss and recovery in black holes, even without a precise microscopic model of black holes as may emerge from an eventual theory of quantum gravity~\cite{HaydenPreskill}. Here, one thinks of the thermal reservoir $\mathcal{H}_E$ as representing the information already contained within a black hole, which is in the thermal state $\hat{\rho}_{\beta E}$ due to entanglement with the Hawking radiation emitted before $t=0$ (a separate quantum system $\mathcal{H}_R$ that does not interact with $\mathcal{H}_E$). The system $\mathcal{H}_S$ then represents infalling matter into the black hole, and the above setup models how infalling information spreads among a black hole's degrees of freedom, and whether it can be recovered from the previously emitted Hawking radiation~\cite{HaydenPreskill}.

In both cases, a problem of fundamental interest is to determine the fastest allowed scrambling time $t_s$, after which $\mathcal{H}_S$ retains no memory of its initial state up to thermodynamic parameters like temperature. From the perspective of quantum statistical mechanics, this amounts to constraining the earliest time after which equilibrium statistical mechanics can apply to any subsystem of interest. Separately, a scrambling time of $t_s \gtrsim O(\beta \ln N)$ has been 
conjectured in sufficiently $k$-local systems of $N=N_S+N_E$ particles~\cite{HaydenPreskill, SekinoSusskind}, based on arguments attempting to reconcile models of quantum information in black holes with ideas in quantum gravity. A number of studies of the scrambling time with different notions of scrambling appear to be consistent with this conjecture~\cite{LashkariFastScrambling, MSSotocBound, MaldacenaStanford, BentsenGuLucasScrambling} in certain classes of systems, but it is not expected that a nontrivial bound on $t_s$ exists for a general quantum mechanical system without any restrictions on the nature of interactions~\cite{LashkariFastScrambling, ChaosComplexityRMT}.

\textit{Qualitative summary}--- In this Letter, we tackle the challenge of proving a bound on fast scrambling assuming ``just the validity of quantum mechanics'' (as posed in Ref.~\cite{LashkariFastScrambling}), going beyond the specific models of local interactions that were previously considered necessary. Our approach is enabled by direct connections between quantum dynamics and the structure of the energy spectrum developed in Refs.~\cite{dynamicalqergodicity, dynamicalqspeedlimit}, assuming little beyond the existence of quantum energy levels. In particular, due to a refinement of the energy-time uncertainty principle resulting from these connections~\cite{dynamicalqergodicity, dynamicalqspeedlimit}, the problem of formulating speed limits on \textit{infinite temperature} scrambling within quantum mechanics can be mapped to the asymptotic properties of the Fourier transform of the density of states (distribution of energy levels) of a Hamiltonian system~\cite{dynamicalqspeedlimit}. In this work, we extend this result in two ways: (1)~We generalize the previously obtained speed limit to a reservoir in an arbitrary initial state $\hat{\rho}_{\beta E}$, including finite temperature states [Eqs.~\eqref{eq:dynamicalinequalityagain}, \eqref{eq:regSFFdef}]; (2)~We identify (and derive a quantitative version of) a rigorous theorem on the asymptotics of Fourier transforms that provides an exact speed limit for any Hamiltonian system.
%[Eq.~\eqref{eq:ts_scr_bound_main}], implying a scrambling time that is at least logarithmic in the scrambled entanglement entropy [Eq.~\eqref{eq:ts_entropy_form}]. 
Surprisingly, our speed limit holds for arbitrarily nonlocal systems (including random matrix Hamiltonians~\cite{Mehta}). It establishes a quantitative limit on fast scrambling [Eq.~\eqref{eq:ts_scr_bound_main}] that is logarithmic in the scrambled entanglement entropy [Eq.~\eqref{eq:ts_entropy_form}], as a property of quantum mechanics itself rather than any particular class of systems.

\textit{Quantitative overview of results}--- To formulate our results in a manner similar to Ref.~\cite{dynamicalqspeedlimit}, we work with the mean return probability of the computational basis states in $\mathcal{H}_S$ (i.e., selecting $\proj_k$ with $\alpha = k$ in Eq.~\eqref{eq:projdef}) under evolution by a Hamiltonian $\hat{H}$, given the corresponding initial state $\hat{\rho}_k(0)$ in Eq.~\eqref{eq:initstatedef}:
\begin{equation}
    P_S(t) \equiv \frac{1}{D_S}\sum_{k=1}^{D_S} \Tr[e^{-i\hat{H}t} \hat{\rho}_k(0) e^{i\hat{H}t} \proj_k].
    \label{eq:newscrpsdef}
\end{equation}
This quantity partially measures the extent to which the \textit{computational basis}, among all the bases in $\mathcal{H}_S$, retains information about the initial state, and constrains the generation of entanglement between the two systems which is at the root of quantum thermalization~\cite{tumulka_CT, CanonicalTypicalityPSW, NormalTypicality, subETH}. For example, the average purity of the reduced states $\hat{\rho}_{k,S}(t) \equiv \Tr_E\hat{\rho}_{k}(t)$ is constrained by~\cite{supplementfinitetemp},
\begin{equation}
    \frac{1}{D_S} \sum_{k=1}^{D_S} \Tr_S\left[\hat{\rho}_{k,S}^2(t)\right] \geq [P_S(t)]^2.
    \label{eq:purity_and_ps}
\end{equation}
Initially, $P_S(0) = 1$, signifying complete retention of this information and pure basis states in $\mathcal{H}_S$. At subsequent times, we expect the scrambling process to cause $P_S(t)$ to decay to a ``scrambled'' value $P_{\scr}$. For example, when information about the initial state is completely and uniformly delocalized (to leading order) over the $D_S$ basis states $\lvert k\rangle_S$, we have
$P_{\scr} \sim D_S^{-1}$,
which is typically necessary for maximal entanglement~\cite{dynamicalqspeedlimit}.
%This corresponds to perfect scrambling of the initial information in $\mathcal{H}_S$, and can occur even for $N_S > N_E$, provided that $N_E$ remains a finite (even if small) fraction of $N$~\cite{dynamicalqspeedlimit}.
Here, we use formal asymptotic notation~\cite{knuth1976asymptotic}, with $\omega, \Omega, \Theta, O, o$ representing $>, \geq, =, \leq, <$ up to any multiplicative constant, and $\gtrsim$, $\sim$, $\lesssim$ for leading-order (in)equalities, when a parameter such as $D$ diverges.
To maintain generality, we do not assume a specific scrambled value at this stage, and we allow $D$ to remain finite without an explicit limit, except where asymptotic notation is used.
%Further, while full scrambling of the subsystem $\mathcal{H}_S$ will require projectors in every basis $\lvert \alpha\rangle$ in Eq.~\eqref{eq:projdef} to attain suitable ``scrambled'' expectation values in the time evolved initial states of Eq.~\eqref{eq:initstatedef}, we will find it sufficient to focus on the computational basis, as the scrambling of this basis is a necessary condition for that of the whole subsystem.

Our main concern is the \textit{sustained}~\cite{dynamicalqspeedlimit} scrambling time $t_s$, after which $P_S(t)$ remains no larger than the corresponding scrambled value $P_{\scr}$. We must also implicitly only consider times $\lvert t\rvert < T_D$, where $T_D$ is a cutoff beyond all time scales of physical interest, chosen to avoid quantum recurrences~\cite{QuantumRecurrences} to large $P_S(t)$. Thus, $t_s$ is set by the condition
\begin{equation}
    P_S(t > t_s) \leq P_{\scr}.
    \label{eq:scrtimecriterion2}
\end{equation}
Given this setup, we can summarize our main result. We find that $t_s$ is bounded in any quantum mechanical system in terms of the following three quantities: (1)~the scrambled return probability $P_{\scr}$, (2)~the information-theoretic \textit{fidelity}~\cite{NielsenChuang} $0 \leq f_{\beta} \leq 1$ between the thermal state $\hat{\rho}_{\beta E}$ and the infinite temperature maximally mixed state $\hat{\rho}_{0 E} \equiv \idop_E/D_E$ in $\mathcal{H}_E$,
\begin{equation}
    f_{\beta} \equiv \Tr_E\sqrt{\sqrt{\hat{\rho}_{0 E}}\ \hat{\rho}_{\beta E}^{\vphantom{1/1}} \sqrt{\hat{\rho}_{0 E}}} = \frac{1}{\sqrt{D_E}} \Tr_E\left[\hat{\rho}_{\beta E}^{1/2}\right],
    \label{eq:fidelityinfdef}
\end{equation}
and (3)~an effective partition function $Z_{\beta}(\tau) = \tildoseff(-i\tau)$ [see Eq.~\eqref{eq:tildoseffdef}] in a Euclidean time interval containing $\tau=0$:
\begin{equation}
    \tau \in [\tau_1,\tau_2],\ \tau_1 \leq 0 \leq \tau_2, \text{ of width } (\tau_2-\tau_1) \equiv c\beta.
    \label{eq:horizontalstrip}
\end{equation}

Specifically, we show that for \textit{every} interval $[\tau_1,\tau_2]$ as in Eq~\eqref{eq:horizontalstrip}, the sustained scrambling time satisfies:
\begin{equation}
    t_s \geq \frac{c \beta}{\pi} \ln \left[\frac{1}{\expratemain_{\text{eff}}}\ln \left(\frac{f_{\beta}^2 Z_{\max}^2}{P_{\scr}}\right)\right].
    \label{eq:ts_scr_bound_main}
\end{equation}
The parameters $Z_{\max} = \max_{\tau \in [\tau_1,\tau_2]} Z_{\beta}(\tau)$ (attained~\cite{supplementfinitetemp} at $\tau = \tau_1$ or $\tau_2$) and $\expratemain_{\text{eff}}$ [Eqs.~\eqref{eq:exprateresult}, \eqref{eq:ts_universal_bound}] depend on $Z_{\beta}(\tau)$ within the interval. One can further optimize the choice of interval in Eq.~\eqref{eq:horizontalstrip}, including $c\beta$, to yield the tightest bound on $t_s$, but this depends on the behavior of $Z_{\beta}(\tau)$ and must be done in system-specific ways. Eq.~\eqref{eq:ts_scr_bound_main} provides a direct, exact bound on the generation of entanglement [via Eq.~\eqref{eq:purity_and_ps}], and applies to arbitrary Hamiltonians $\hat{H}$, in contrast to the (conjectured) bound on chaos in out-of-time-ordered correlators (OTOCs) of local operators that is shown to hold only for special classes of Hamiltonians~\cite{MSSotocBound, BentsenGuLucasScrambling, LucasEntanglementVsOTOC, LucasReview}. Further, the two bounds do not necessarily imply each other and refer to distinct physical processes associated with scrambling~\cite{BentsenGuLucasScrambling, LucasEntanglementVsOTOC, dynamicalqspeedlimit, TezukaSYKScrambling}.

We can alternatively directly constrain $t_s$ in terms of a minimum 2nd R\'{e}nyi entanglement entropy~\cite{HorodeckiEntanglementReview} $S_{2,S}$ required for a state to be considered scrambled in $\mathcal{H}_S$:
\begin{equation}
S_2[\hat{\rho}_{k,S}^2(t>t_s)] \equiv -\ln\Tr[\hat{\rho}_{k,S}^2(t)] \geq S_{2,S}.
\label{eq:entropyscrambling}
\end{equation}
By Eq.~\eqref{eq:purity_and_ps}, $P_{\scr}$ can then be replaced~\cite{supplementfinitetemp} by $\exp(-S_{2,S}/2)$ in Eqs.~\eqref{eq:scrtimecriterion2} and \eqref{eq:ts_scr_bound_main}. 
To simplify the resulting bound on $t_s$, we consider a thermodynamic limit $N_S, N_E\to\infty$, such that $Z_{\beta}(\tau)$ remains $O(1)$ in $[\tau_1,\tau_2]$ and $f_{\beta} \geq \exp[-o(S_{2,S})]$ is not too small. In this regime, we obtain~\cite{supplementfinitetemp} to leading order:
\begin{equation}
    t_s \gtrsim \frac{c\beta}{\pi} \ln S_{2,S}.
    \label{eq:ts_entropy_form}
\end{equation}
Further, a canonical thermal state typically admits $c = {(1-\epsilon)/2}$ with a small $\epsilon > 0$ for $Z_{\beta}(\tau) = O(1)$ [as motivated near Eq.~\eqref{eq:partitionfuncapprox}]. For volume-law (including maximal) entanglement~\cite{Nandkishore, MBLThermalization}, $S_{2,S}\sim \mu N_S$ for $N_S < N_E$. Consequently, the conjectured~\cite{HaydenPreskill, SekinoSusskind} fast scrambling bound $t_s = \Omega(\beta \ln N)$ holds for subsystems with $N_S \sim N^\kappa$ (where $0 < \kappa \lesssim 1$).
%Note that our assumptions in going from Eq.~\eqref{eq:ts_scr_bound_main} to Eq.~\eqref{eq:ts_fast_scrambling_form} do not involve the structure of the Hamiltonian $\hat{H}$.

\textit{Effective nonunitary dynamics at finite temperatures}--- We begin our technical derivation with a brief summary of a result of Ref.~\cite{dynamicalqspeedlimit}. There, we derived a universal speed limit on $P_S(t)$ under arbitrary time-dependent quantum operations~\cite{NielsenChuang} $\hat{\rho}_k(t) = \sum_{r=1}^M \kraus{r}{t} \hat{\rho}_{k,0} \krausd{r}{t}$, but only for a special choice of initial states $\hat{\rho}_{k,0} = \proj_k/D_E$ corresponding to the infinite temperature state $\hat{\rho}_{0 E}$ in Eq.~\eqref{eq:initstatedef}.
%, which is the only choice of temperature where the initial states form a complete set in $\mathcal{H}$.
To adapt this bound for the more general family of initial states in Eqs.~\eqref{eq:initstatedef} and \eqref{eq:newscrpsdef}, we can apply a simple trick that trades off unitary dynamics with an arbitrary initial state for nonunitary dynamics with infinite temperature initial states.

Specifically, in Eq.~\eqref{eq:newscrpsdef}, we write
\begin{align}
    \hat{\rho}_k(t) = e^{-i\hat{H}t} \hat{\rho}_k(0) e^{i\hat{H}t} &= \frac{1}{D_E}\kraus{1}{t}\proj_k\krausd{1}{t}, \label{eq:timeevolvingstateKraus} \\
    \text{with } \kraus{1}{t} &\equiv e^{-i\hat{H}t} \hat{\rho}_{\beta E}^{1/2} D_E^{1/2}.
\end{align}
%Here, $\hat{\rho}_{\beta E}^{1/2}$ denotes a square root of $\hat{\rho}_{\beta E}$ with nonnegative eigenvalues (whose existence is guaranteed by the positivity of density operators).
Thus, $P_S(t)$ can also be regarded as the mean return probability of the infinite temperature initial states $\hat{\rho}_{k,0} = \proj_k/D_E$ with effective time-dependent nonunitary dynamics given by $\kraus{1}{t}$. In this case, the speed limit of Ref.~\cite{dynamicalqspeedlimit} gives the following bound~\cite{supplementfinitetemp}, which underlies the main results of this Letter:
\begin{equation}
    P_S(t) \geq K_{\beta}(t). \label{eq:dynamicalinequalityagain}
\end{equation}
In the above inequality,
\begin{equation}
    K_{\beta}(t) = \frac{D_E}{D^2}\left\lvert\Tr\left(e^{-i\hat{H}t} \hat{\rho}_{\beta E}^{1/2}\right)\right\rvert^2 \label{eq:regSFFdef}
\end{equation}
is the \textit{regularized} spectral form factor (SFF) corresponding to the density operator $\hat{\rho}_{\beta E}$, occurring here as a special case of the generalized SFF~\cite{dynamicalqspeedlimit} for time-dependent quantum operations $K(t) \equiv D^{-2}\sum_{r=1}^M \left\lvert \Tr \kraus{r}{t}\right\rvert^2$, which extends and unifies time-dependent~\cite{timeDependentSFF} and dissipative~\cite{dissipativeFormFactor0, dissipativeFormFactor1, dissipativeFormFactor2, dissipativeFormFactor3} SFFs.

For context, we note that SFFs are a key staple of quantum chaos that characterize energy level correlations~\cite{Haake} and the presence of observable-independent ergodic properties in quantum dynamics~\cite{dynamicalqergodicity, bakeranomalieserg}. Further, they can be directly measured in experiments~\cite{SFFmeas, pSFF, pSFFexpt1}. Eq.~\eqref{eq:regSFFdef} specifically belongs to a family of regularized SFFs~\cite{BlackHoleRandomMatrix, SonnerThermodouble, ExpRamp4, SFFinflectionbound, matsoukas2023quantum, matsoukas2023unitarity, KurchanRegSFFzeros} and correlators~\cite{MSSotocBound, MaldacenaStanford}, where the splitting of roots of a density operator between different factors (usually for analytical convenience) has been regarded as somewhat artificial to engineer in experiments~\cite{SFFmeas, OTOC2022ExptFiniteTemp}. That the regularized SFF $K_{\beta}(t)$ universally constrains pure Hamiltonian dynamics via Eq.~\eqref{eq:dynamicalinequalityagain} now provides an exact operational physical justification for regularization, at least in one context. 

\textit{Properties of regularized density of states}--- From Eqs.~\eqref{eq:dynamicalinequalityagain}, \eqref{eq:regSFFdef}, it is clear that we can obtain universal bounds on the decay of $P_S(t)$ if we can constrain the decay of $K_{\beta}(t)$ for any Hamiltonian system. To do so, it is useful to focus on its ``square root'' up to constant factors, the trace of $\kraus{1}{t}$,
\begin{equation}
    \tildoseff(t) \equiv \frac{\Tr \kraus{1}{t}}{\Tr \kraus{1}{0}} = \frac{D_E^{1/2}}{f_{\beta}D}\sum_{n} \langle E_n\rvert\hat{\rho}_{\beta E}^{1/2}\lvert E_n\rangle e^{-iE_n t},
    \label{eq:tildoseffdef}
\end{equation}
normalized to $\tildoseff(0) = 1$, so that $\lvert \tildoseff(t \geq 0)\rvert \leq 1$. $\tildoseff(t)$ is the Fourier transform of the \textit{regularized} density of states (a probability distribution that integrates to $\tildoseff(0) = 1$),
\begin{equation}
\doseff(E) = \frac{D_E^{1/2}}{f_{\beta}D}\sum_n \langle E_n\rvert\hat{\rho}_{\beta E}^{1/2}\lvert E_n\rangle \delta(E-E_n) \geq 0.
\end{equation}
In terms of this quantity, we have
\begin{equation}
    K_{\beta}(t) = f_{\beta}^2\left\lvert \tildoseff(t)\right\rvert^2.
\end{equation}
Now, we can use this in conjunction with Eqs.~\eqref{eq:scrtimecriterion2} and \eqref{eq:dynamicalinequalityagain} to directly relate the scrambling time $t_s$ to the regularized density of states,
\begin{equation}
    \left\lvert \tildoseff(t > t_s)\right\rvert^2 \leq \frac{P_{\scr}}{f_{\beta}^2},
    \label{eq:dosscramblingcriterion_gen}
\end{equation}
generalizing the corresponding infinite temperature result of Ref.~\cite{dynamicalqspeedlimit}. For this to give a nontrivial $t_s > 0$, we must have $f_{\beta}^2 > P_{\scr}$ (as $\lvert \tildoseff(t)\rvert \leq 1$). This condition can always be satisfied for any given $P_{\scr}$ by any $\hat{\rho}_{\beta E}$ with sufficient entanglement entropy in $\mathcal{H}_E$~\cite{supplementfinitetemp}, and for any given initial state $\hat{\rho}_{\beta E}$ by choosing $N_S$ large enough to have sufficiently small $P_{\scr} < f_{\beta}^2$ [e.g., if $P_{\scr} \sim D_S^{-1}$].

To constrain the decay of $\tildoseff(t)$, we extend the time variable to the complex plane, $\tcomp = t - i \tau$, with $t, \tau \in \mathbb{R}$. In this case, $\tildoseff(\tcomp)$ is a weighted sum of $D$ (analytic) exponentials $e^{-iE_n \tcomp}$, and is therefore analytic everywhere on $\mathbb{C}$. It is also real-valued on the imaginary axis, with $\tildoseff(-i\tau)$ = $Z_{\beta}(\tau)$ being the partition function of the regularized density of states at inverse temperature (or Euclidean time) $\tau$. From Eq.~\eqref{eq:tildoseffdef}, this partition function sets an upper bound on $\lvert \tildoseff(\tcomp)\rvert$ at each $\tau$:
\begin{equation}
    \lvert \tildoseff(t-i\tau)\rvert \leq Z_{\beta}(\tau),\ \forall\ t \in \mathbb{R},
    \label{eq:ridgeproperty}
\end{equation}
which is saturated (at least) at $t = 0$.

While $Z_{\beta}(0) = 1$, it can diverge for other values of $\tau$, making $\tildoseff(\tcomp)$ unbounded in a thermodynamic limit. For example, for a canonical thermal state $\hat{\rho}_{\beta E}$ with $N_S \ll N_E$ whose energy eigenbasis matrix elements behave as $e^{-\beta E_n}$,
\begin{equation}
    Z_{\beta}(\tau) \approx \frac{\sum_n e^{-(\beta+2\tau)E_n/2}}{\sum_n e^{-\beta E_n/2}}.
    \label{eq:partitionfuncapprox}
\end{equation}
For most $-\beta/2 < \tau \leq 0$, the numerator remains exponentially suppressed, and we expect $Z_{\beta}(\tau) = \Theta(1)$ if the density of states doesn't grow exponentially with $E_n$. However, for $\tau \approx -\beta/2$, the exponential suppression in the numerator is lost and $Z_{\beta}(-\beta/2) \approx D_E^{1/2}/f_{\beta}$ diverges as $N_E \to \infty$. This motivates considering horizontal strips containing $\tau=0$, restricted to a specific width $c\beta$ as in Eq.~\eqref{eq:horizontalstrip}, in which quantities remain bounded. While this argument suggests that a strip in $(-\beta/2,0]$ with $c = (1-\epsilon)/2$ is appropriate for thermal states, we formulate our results for an arbitrary width $c\beta \geq 0$ to retain exactness and universal applicability to arbitrary initial states.

Here, we also note previous physical bounds involving similar properties of analyticity and boundedness in a strip, including that expected for regularized OTOCs~\cite{MSSotocBound}, as well as at a point of inflection of the regularized SFF~\cite{SFFinflectionbound}, which however differ qualitatively and quantitatively from the results of interest [e.g., Eq.~\eqref{eq:doubleexpbound1}] in this Letter.

\textit{Universal bound on fast scrambling}--- In our case, given that $\tildoseff(\tcomp)$ is analytic and bounded in the strip given by Eq.~\eqref{eq:horizontalstrip}, results from the theory of analytic functions~\cite{AhlforsHeins, Hayman, LinnikOstrovskii, Hayman_v2, Koosis} constrain $\tildos_{\beta}(t\geq 0)$ to decay no faster than a double exponential in $t$ for \textit{nearly all} $t>0$, \textit{except} in an ``exceptional set'' $\texcept \subset \mathbb{R}_{>0}$ of finite length in $t$. In particular, recalling that $Z_{\max} = \max_{\tau \in [\tau_1,\tau_2]} Z_{\beta}(\tau)$, we have~\cite{supplementfinitetemp}
\begin{equation}
    \lvert \tildoseff(t \notin \texcept)\rvert^2 \geq Z_{\max}^2 \exp\left[-\expratemain \exp\left(\frac{\pi t}{c\beta}\right)\right],
    \label{eq:doubleexpbound1}
\end{equation}
where $\expratemain$ is a finite constant~\cite{AhlforsHeins, Hayman}.
For intuition behind the exceptional set $\texcept$, consider~\cite{Reimann2016, dynamicalqergodicity, dynamicalqspeedlimit} the function $\tildoseff(t) = \sinc^2(\pi t/\beta)$, which is oscillating and vanishes at the nodes $t = n\beta$ for positive integers $n$. While its asymptotic behavior is generally $t^{-2}$, satisfying Eq.~\eqref{eq:doubleexpbound1} for most $t$, the oscillations bring it below the bound of Eq.~\eqref{eq:doubleexpbound1} in a (shrinking with $n$) neighborhood of each node, which comprise the finite-length set $\texcept$. The presence of the exceptional set excludes all discrete-time systems from a nontrivial application of Eq.~\eqref{eq:doubleexpbound1}, such as Haar random unitaries that maximally scramble $\mathcal{H}_S$ after just a single time-step $t=1$, but \textit{not} their continuous-time Hamiltonian extrapolations.

However, in this form, Eq.~\eqref{eq:doubleexpbound1} is not sufficiently predictive for our purposes. This is because the ``finite'' constant $\expratemain$ and the ``finite'' duration of $\texcept$ are undetermined, and could still be larger or smaller than any scale of interest for a system with finite $D$.
Fortunately, we can adapt the methods of Ref.~\cite{Hayman_v2} to explicitly obtain their values~\cite{supplementfinitetemp}. We find that a rescaled length of $\texcept$ can be made smaller than any chosen $\ell > 0$,
\begin{equation}
    \int_{\texcept}\diff t \leq \frac{c\beta}{\pi}\ell
    \label{eq:exceptionalsetlength},
\end{equation}
if one chooses $\expratemain$ according to 
\begin{equation}
   \expratemain_{\ell} = \frac{(2+\ell)}{\ell}\left\lbrace\min_{\tau \in [\tau_1,\tau_2]}\ \frac{2(\pi^2+8)\ln\left[\frac{Z_{\max}}{Z_{\beta}(\tau)}\right]}{\cos\left[\frac{\pi}{c\beta} \left(\tau-\frac{\tau_1+\tau_2}{2}\right)\right]}\right\rbrace.
   \label{eq:exprateresult}
\end{equation}
%For min_{\tau}, one assumes that the function is extrapolated to the closed interval so as to be continuous.

We can now obtain the scrambling time from Eq.~\eqref{eq:dosscramblingcriterion_gen}, with the caveat that the exceptional set introduces an uncertainty in the scrambling time with range $[t_s, t_s+ (c\beta\ell/\pi)]$ from Eq.~\eqref{eq:exceptionalsetlength}. We get~\cite{supplementfinitetemp}, using Eq.~\eqref{eq:dosscramblingcriterion_gen}, \eqref{eq:doubleexpbound1} and \eqref{eq:exprateresult},
\begin{equation}
    t_s \geq \frac{c\beta}{\pi} \ln \left[\frac{1}{\expratemain_{\ell}}\ln \frac{f_{\beta}^2 Z_{\max}^2}{P_{\scr}}\right]-\ell,
    \label{eq:ts_universal_bound}
\end{equation}
where the second term accounts for the uncertainty by subtracting the length of $\texcept$ from the bound obtained. 
To optimize tightness, we choose $\ell$ to maximize the right hand side of Eq.~\eqref{eq:ts_universal_bound}. We get $\ell = \sqrt{3}-1$, from which we recover our main universal result, Eq.~\eqref{eq:ts_scr_bound_main}, with  $\expratemain_{\text{eff}} \equiv e^{\sqrt{3}-1}\expratemain_{(\sqrt{3}-1)}$.

Further, an important question is whether a near-optimally decaying double exponential function as in Eq.~\eqref{eq:doubleexpbound1} can emerge in a system with a nonnegative (regularized) density of states $\doseff(E) \geq 0$ (e.g., \cite{dynamicalqspeedlimit}). This is indeed the case: a class of examples was constructed in Refs.~\cite{Ostrovskii}, \cite[p.35]{LinnikOstrovskii}, 
and we construct other (simpler) examples~\cite{supplementfinitetemp} by
taking the thermodynamic limit of $\tildoseff(t)$ to be the 
convolution ${g*g}$, where $g$ is  the double-exponentially decaying function $g(t)=\exp(-e^{\frac{\pi}{\cbeta}t})\exp(-e^{-\frac{\pi}{\cbeta}t})$.
Given sufficiently (including Haar~\cite{Haake}) random energy eigenstates, these examples scramble~\cite{supplementfinitetemp} by $t_s \lesssim (2c\beta/\pi)\ln N_S$, nearly saturating Eq.~\eqref{eq:ts_entropy_form}~\cite{dynamicalqspeedlimit, Reimann2016, ChaosComplexityRMT, CotlerHunterJones2}.

\textit{Discussion}---We consider the implications of our results in light of the fast scrambling conjecture~\cite{SekinoSusskind}, which states that (a) $t_s \gtrsim O(\beta \log N)$ for some definition of scrambling time and some class of systems (neither rigorously specified~\cite{LashkariFastScrambling, BentsenGuLucasScrambling, LucasEntanglementVsOTOC, LucasReview}), and (b)  quantum systems believed to describe black holes (again not rigorously specified) saturate this bound. For a successful analysis, we must specify at least one of the three unspecified features and examine its implications for the others.

In this Letter, we have considered the notion of scrambling in Eq.~\eqref{eq:scrtimecriterion2} that directly constrains the time required for sustained entanglement generation and the applicability of equilibrium statistical mechanics in subsystems. In this case, Eq.~\eqref{eq:ts_scr_bound_main} provides a universal quantum mechanical bound on information scrambling by \textit{any} Hamiltonian system; Eq.~\eqref{eq:ts_entropy_form} further shows that even the conjectured $t_s = \Omega(\beta \ln N)$ form holds \textit{universally} for volume-law entanglement in a certain thermodynamic regime. This completely specifies and establishes statement (a) of the fast scrambling conjecture for this notion of scrambling, generalizing it beyond any specific assumptions of interaction structure.

For statement (b), we first consider the example of the ``maximally chaotic'' Sachdev-Ye-Kitaev models~\cite{MaldacenaStanford}, which saturate the OTOC bound for local correlators~\cite{MSSotocBound}. A key feature of these models is a zero-temperature entropy (extensive clustering of states near the ground state)~\cite{MaldacenaStanford, SachdevEntropy, GKSTcomplexSYK} that is directly believed to capture some aspects of black hole physics~\cite{SachdevEntropyHolography, SachdevEntropyBlackHoles}. However, this feature leads to a subleading slow decay~\cite{BlackHoleRandomMatrix} of $K_{\beta}(t)$, giving $t_s$ exponential in $N_S$ (for large $N_S$) as shown in Ref.~\cite{dynamicalqspeedlimit} (see also Ref.~\cite{KurchanSYKentropyglass} for an interesting parallel to glassy metastable states). Thus, for scrambling via entanglement generation, we can limit the applicability of statement (b) by noting that systems with a nonvanishing zero temperature entropy are not fast scramblers in large subsystems. At the same time, the discussion after Eq.~\eqref{eq:ts_universal_bound} illustrates what the energy spectrum of a ``nearly fast scrambler'' may look like, showing that systems nearly saturating Eq.~\eqref{eq:ts_entropy_form} do \textit{formally} exist. An intriguing open challenge is to address the following question: What is the natural physical setting in which we can expect to find such fast scramblers in the sense of sustained entanglement generation?

\begin{acknowledgments}
This work was supported by the U.S. Department of Energy, Office of Science, Basic Energy Sciences under Award No. DE-SC0001911.
The authors acknowledge the University of Maryland supercomputing resources (https://hpcc.umd.edu) made available for generating the plots in Section D of the supplement
to this paper. 
A.V. acknowledges useful discussions with Jorge Kurchan during the program -- Stability of Quantum Matter in and out of Equilibrium at Various Scales (code: ICTS/SQMVS2024/01) at the International Center for Theoretical Sciences, Bengaluru, India.
\end{acknowledgments}

\bibliography{FiniteTempBibliography}

\clearpage
\appendix

\onecolumngrid

\begin{center}
\textbf{\large Proof of a Universal Speed Limit on Fast Scrambling in Quantum Systems}\vspace{0.2em}

\textbf{\large Supplemental Material}

\vspace{1em}

{\normalsize Amit Vikram,$^{1}$\ Laura Shou,$^{1,2}$\ and Victor Galitski$^{1}$}\vspace{0.2em}

$^{1}$\textit{\small Joint Quantum Institute and Department of Physics, University of Maryland, College Park, MD 20742, USA}

$^{2}$\textit{\small Condensed Matter Theory Center, Department of Physics, \\ University of Maryland, College Park, MD 20742, USA}

\vspace{1em}
\end{center}

\maketitle

In this supplement, we derive various results presented in the main text. In Sec.~\ref{sec:scramblingandentanglement}, we derive quantitative relations between scrambling in terms of return probabilities and in terms of entanglement entropies [supporting Eqs.~\eqref{eq:purity_and_ps} and \eqref{eq:ts_entropy_form} of the main text]. In Sec.~\ref{sec:psktgeneralization}, we derive the inequality $P_S(t) \geq K_{\beta}(t)$ that formulates the energy-time uncertainty principle with sensitivity to microscopic values of $P_S(t)$ and nontriviality at arbitrarily long times (as in Ref.~\cite{dynamicalqspeedlimit}), generalized to any initial state of the external system including finite temperature states [Eq.~\eqref{eq:dynamicalinequalityagain} of the main text]. In Sec.~\ref{sec:scramblingtimeanddoubleexp}, we derive our main result, the universal bound on the scrambling time $t_s$ [Eq.~\eqref{eq:ts_scr_bound_main} of the main text] by proving the double exponential bound on analytic functions with quantitatively determined parameters [Eqs.~\eqref{eq:doubleexpbound1}-\eqref{eq:ts_universal_bound} of the main text]. Finally, in Sec.~\ref{sec:nearlyfastscramblers}, we show that quantum mechanical systems that nearly saturate our bound on the scrambling time formally exist, by discussing explicit examples with a suitable (regularized) density of states.

\section{Sustained scrambling and entanglement entropies}
\label{sec:scramblingandentanglement}
\subsection{The mean return probability bounds the average purity of the evolving basis states}
In this subsection, we show that the mean return probability $P_S(t)$ constrains the generation of entanglement, as stated in Eq.~\eqref{eq:purity_and_ps} of the main text. We consider the average purity on the left hand side of this equation, and split it into contributions from matrix elements in the computational basis:
\begin{align}
    \frac{1}{D_S}\sum_{k=1}^{D_S}\Tr[\hat{\rho}_{k,S}^2(t)] &= \frac{1}{D_S} \sum_{k,k_1,k_2=1}^{D_S} \left\lvert {_S}\langle k_1\rvert \Tr_E\left[\hat{\rho}_k(t)\right]\rvert k_2\rangle_S\right\rvert^2 \nonumber \\
    &\geq \frac{1}{D_S} \sum_{k,k_1=1}^{D_S}\left\lvert {_S}\langle k_1\rvert \Tr_E\left[\hat{\rho}_k(t)\right]\rvert k_1\rangle_S\right\rvert^2 \nonumber \\
    &\geq \frac{1}{D_S} \sum_{k=1}^{D_S}\left\lvert {_S}\langle k\rvert \Tr_E\left[\hat{\rho}_k(t)\right]\rvert k\rangle_S\right\rvert^2.
    \label{eqs:diagonalinequalities}
\end{align}
We have dropped the off-diagonal matrix elements $k_1\neq k_2$ and used the non-negativity of the corresponding terms in the second line. In the third line, we have further dropped the $k\neq k_1$ contributions, which are also non-negative, to obtain an inequality focusing on the 3D diagonal $k=k_1=k_2$ contributions. Though a simple technique, dropping off-diagonal terms in such inequalities has been considerably useful in other physical contexts as well~\cite{Zelditch, dynamicalqspeedlimit}.

We can rewrite the last expression in Eq.~\eqref{eqs:diagonalinequalities} in terms of the projectors $\proj_{k} = \lvert k\rangle_S\langle k\rvert \otimes \idop_E$, which gives
\begin{equation}
    \frac{1}{D_S}\sum_{k=1}^{D_S}\Tr[\hat{\rho}_{k,S}^2(t)] \geq \frac{1}{D_S}\sum_{k=1}^{D_S} \Tr\left[\hat{\rho}_k(t)\proj_k\right]^2.
\end{equation}
The right hand side is the mean (over all values of $k$) of the squares of the return probabilities $P_k(t) \equiv \Tr\left[\hat{\rho}_k(t)\proj_k\right] \geq 0$, which must be at least the square of their mean. Thus, we obtain:
\begin{equation}
    \frac{1}{D_S}\sum_{k=1}^{D_S}\Tr[\hat{\rho}_{k,S}^2(t)] \geq \frac{1}{D_S}\sum_{k=1}^{D_S} P_k^2(t) \geq \left(\frac{1}{D_S}\sum_{k=1}^{D_S}P_k(t)\right)^2 = \left[P_S(t)\right]^2,
    \label{eqs:purity_and_ps}
\end{equation}
which is Eq.~\eqref{eq:purity_and_ps} in the main text.

\subsection{The scrambling time \texorpdfstring{$t_s$}{ts} in terms of the scrambled R\'{e}nyi entanglement entropy}

Here, we express the scrambling bound of Eq.~\eqref{eq:ts_scr_bound_main} in terms of the entanglement entropy of a scrambled state, deriving Eq.~\eqref{eq:ts_entropy_form} of the main text.

The second R\'{e}nyi entanglement entropy of the state $\hat{\rho}_k(t)$ in the subsystem $\mathcal{H}_S$ is defined as
\begin{equation}
    S_2[\hat{\rho}_{k,S}^2(t)] \equiv -\ln \Tr\left[\hat{\rho}_{k,S}^2(t)\right].
\end{equation}
This is a direct measure of entanglement~\cite{NielsenChuang}. For example, volume-law entanglement corresponds to $S_2 \sim \min\lbrace N_S \ln 2, N_E \ln 2\rbrace$ to leading order, usually associated with maximal scrambling as we also have the general bound:
\begin{equation}
 S_2 \leq \lbrace N_S \ln 2, N_E \ln 2\rbrace
 \label{eqs:entropygeneralbound}
\end{equation}
We now introduce the scrambled entropy cutoff $S_{2,S}$ as the minimum necessary value of the entanglement entropy at which a state is considered scrambled in $\mathcal{H}_S$. In particular, we require that every scrambled state satisfies (as in Eq.~\eqref{eq:entropyscrambling} of the main text):
\begin{equation}
    S_2[\hat{\rho}_{k,S}^2(t > t_s)] \geq S_{2,S}
\end{equation}
(where $t<T_D$ is implicit to avoid quantum recurrences~\cite{QuantumRecurrences}). It follows that the mean purity on the left hand side of Eq.~\eqref{eqs:purity_and_ps} is at most $\exp[-S_{2,S}]$ for $t>t_s$. Then Eq.~\eqref{eqs:purity_and_ps} [or Eq.~\eqref{eq:purity_and_ps} in the main text] implies
\begin{equation}
    P_S(t > t_s) \leq \exp\left(-\frac{1}{2} S_{2,S}\right),
    \label{eqs:ps_and_entropy}
\end{equation}
corresponding to replacing $P_{\scr} \to \exp(-S_{2,S}/2)$ in Eq.~\eqref{eq:scrtimecriterion2} of the main text.
We can substitute this in the bound on the sustained scrambling time $t_s$, given by Eq.~\eqref{eq:ts_scr_bound_main} of the main text, which yields
\begin{equation}
    t_s \geq \frac{c\beta}{\pi}\ln \left[\frac{S_{2,S}+4\ln(f_{\beta}Z_{\max})}{2\expratemain_{\text{eff}}}\right].
    \label{eqs:ts_entropy_main}
\end{equation}
In a thermodynamic limit with $\lvert \ln (f_{\beta}Z_{\max})\rvert = o[S_{2,S}]$ and $\expratemain_{\text{eff}} = \Theta(1)$ (requiring $Z_{\beta}(\tau) = O(1)$ in the interval $\tau \in [\tau_1,\tau_2]$), we get the leading order inequality
\begin{equation}
    t_s \gtrsim \frac{c\beta}{\pi} \ln S_{2,S},
    \label{eqs:ts_entropy_thermo}
\end{equation}
which is Eq.~\eqref{eq:ts_entropy_form} of the main text. In this thermodynamic regime, we can qualitatively phrase our fast scrambling result in simple terms: the scrambling time is at least logarithmic in the scrambled ($2$nd R\'{e}nyi) entanglement entropy.
While Eqs.~\eqref{eqs:ts_entropy_main} and \eqref{eqs:ts_entropy_thermo} directly capture the time required to generate a given degree of entanglement in our setup, we note two reasons for preferring the expression in terms of the scrambled return probability $P_{\scr}$ in Eq.~\eqref{eq:ts_scr_bound_main} of the main text.
\begin{enumerate}
    \item Most significantly, the bound in terms of $P_{\scr}$ can be made tighter. Intuitively, this is because $P_S(t)$ can continue to track aspects of quantum dynamics and scrambling that the entanglement entropy is insensitive to; for example, $S_{2,S}$ is subject to the bound of Eq.~\eqref{eqs:entropygeneralbound}, while $P_{\scr}$ can be taken to be as small as desired. For concreteness, consider maximal scrambling (e.g., to Haar random behavior), for which $P_{\scr} \sim D_S^{-1}$ and $S_{2,S} \sim \min\lbrace N_S,N_E\rbrace \ln 2$. When $N_S \leq N_E$, both give identical bounds to leading order, $t_s \gtrsim (c\beta/\pi)\ln N_S$. But for $N_S > N_E$, while $-\ln P_{\scr} \sim N_S$ continues to hold and constrains $t_s \gtrsim (c\beta/\pi)\ln N_S$ from Eq.~\eqref{eq:ts_scr_bound_main} of the main text, the leading contribution to the entropy comes from $\mathcal{H}_E$ with $S_{2,S} \sim N_E \ln 2$, due to which Eq.~\eqref{eqs:ts_entropy_main} sets a much weaker bound $t_s \gtrsim (c\beta/\pi)\ln N_E$. For a physical example where this can be relevant, see Ref.~\cite{dynamicalqspeedlimit}, where scrambling is considered for subsystem sizes including $N_S > N_E$ (e.g., in the SYK-$4$ model).
    \item $P_{\scr}$ requires only projective measurements in a computational basis with the $D_S$ projectors $\proj_k$, and directly determines the dynamics of observables diagonal in this basis. In contrast, $S^{(2)}_{\scr}$ in principle probes more of the full structure of the reduced density matrices $\hat{\rho}_{k,S}$ in the Hilbert space $H_S$. In particular, it may be dominated by contributions (such as from the off-diagonal matrix elements of $\hat{\rho}_{k,S}$) not relevant to a specific observable of interest.
\end{enumerate}

On the other hand, $S^{(2)}_{\scr}$ can be efficiently measured in experiments, for instance, using local randomized measurements~\cite{AndreasPurityRM}; this means that Eqs.~\eqref{eqs:ts_entropy_main} and \eqref{eqs:ts_entropy_thermo} are also testable experimentally, in addition to the bound in terms of $P_S(t)$.

\section{A quantum speed limit for arbitrary initial states \texorpdfstring{$\hat{\rho}_{\beta E}$}{pbE}}
\label{sec:psktgeneralization}
In this section, we describe the passage from the infinite temperature speed limit of Ref.~\cite{dynamicalqspeedlimit} to the inequality $P_S(t) \geq K_{\beta}(t)$ for arbitrary initial states under Hamiltonian evolution with more details than in the main text, beginning with a brief review of the former. This is essentially an expanded version of the derivation containing Eq.~\eqref{eq:timeevolvingstateKraus} through Eq.~\eqref{eq:regSFFdef} in the main text.

\subsection{Derivation of \texorpdfstring{$P_S(t) \geq K_{\beta}(t)$}{Ps(t) >= Kb(t)}}

Consider a general time-dependent completely positive quantum operation~\cite{NielsenChuang} acting on an initial reference state $\hat{\rho}_0$ in $\mathcal{H}$, with any set of $M$ time-independent Kraus operators $\lbrace \kraus{r}{t}\rbrace_{r=1}^{M}$,
\begin{equation}
    \hat{\rho}(t) = \sum_{r=1}^{M} \kraus{r}{t} \hat{\rho}_0\krausd{r}{t}.
\end{equation}
Note that the initial state $\hat{\rho}(0)$ is not necessarily equal to the reference state $\hat{\rho}_0$, and further that the quantum operation is not required to be trace preserving~\cite{NielsenChuang}. Quantum dynamics of this type may be characterized by a generalized SFF,
\begin{equation}
    K(t) \equiv \frac{1}{D^2} \sum_{r=1}^{M} \left\lvert \Tr \kraus{r}{t} \right\rvert^2
\end{equation}
Given this setup, the mean return probability for specific initial reference states $\hat{\rho}_{0,k} = \proj_k/D_E$ (or equivalently, the states of Eq.~\eqref{eq:initstatedef} in the main text with $\hat{\rho}_{\beta E} = \hat{\rho}_{0 E} \equiv \idop_E/D_E$),
\begin{equation}
    P_S(t)[\proj_k] \equiv \frac{1}{D} \sum_{k=1}^{D_S}\Tr[\proj_k(t)\proj_k(0)] = \frac{1}{D} \sum_{k=1}^{D_S}\sum_{r=1}^{M} \Tr[\kraus{r}{t} \proj_k(0) \krausd{r}{t} \proj_k(0)].
    \label{eqs:psdef_analogous_memories}
\end{equation}
was shown to be constrained by~\cite{dynamicalqspeedlimit}
\begin{equation}
    P_S(t)[\proj_k] \geq K(t),
    \label{eqs:dynamicalinequlityinfinitetemp}
\end{equation}
for \textit{any} complete, orthonormal choice of projectors $\proj_k$.

For the above speed limit, it is crucial that the initial reference states $\proj_k/D_E$ form a complete, orthogonal set for the full Hilbert space $\mathcal{H}$. It is this completeness that allows constraining the basis-dependent $P_S(t)$ with the basis-independent $K(t)$ that involves a trace over the entire space $\mathcal{H}$. However, the initial states $\hat{\rho}_{k}(0) = \lvert k\rangle_S\langle k\rvert \otimes \hat{\rho}_{\beta E}$ of Eq.~\eqref{eq:initstatedef} in the main text are complete only in $\mathcal{H}_S$, and generally not in $\mathcal{H}_E$ except in the specific case of the infinite temperature state $\hat{\rho}_{0 E}$.

The resolution to this difficulty comes from writing the initial state $\hat{\rho}_k(0)$ in terms of nontrivial Kraus operators $\kraus{r}{0}$ acting on the initial reference state $\hat{\rho}_{0,k} = \proj_k/D_E$. This is enabled by noting that $\hat{\rho}_{\beta E}$ is a positive operator (has non-negative eigenvalues)~\cite{NielsenChuang}, and therefore admits a positive Hermitian square root $\hat{\rho}_{\beta E}^{1/2}$ (with non-negative eigenvalues). In particular, we have
\begin{equation}
    \hat{\rho}_k(0) = \hat{\rho}^{1/2}_{\beta E} \left(\lvert k\rangle_S \langle k\rvert \otimes \idop_E\right) \hat{\rho}^{1/2}_{\beta E} = \left[D_E^{1/2} \hat{\rho}^{1/2}_{\beta E}\right]\hat{\rho}_{0,k}  \left[D_E^{1/2}\hat{\rho}^{1/2}_{\beta E}\right]^\dagger.
\end{equation}
Now, we can set $M=1$ with $\kraus{1}{0} = D_E^{1/2} \hat{\rho}_{\beta E}^{1/2}$, i.e., a single nonvanishing Kraus operator. As subsequent time evolution is generated by a Hamiltonian $\hat{H}$, we have the time-dependent Kraus operators:
\begin{equation}
    \kraus{r}{t} = e^{-i\hat{H}t} \kraus{r}{0} = D_E^{1/2} e^{-i\hat{H}t}\hat{\rho}_{\beta E}^{1/2} \delta_{r,1}.
\end{equation}
Substituting these Kraus operators in Eq.~\eqref{eqs:psdef_analogous_memories}, we obtain precisely the mean return probability in Eq.~\eqref{eq:newscrpsdef} of the main text:
\begin{equation}
    P_S(t)[\proj_k] = \frac{1}{D}\sum_{k=1}^{D_S}D_E \Tr[e^{-i\hat{H}t}\hat{\rho}_{\beta E}^{1/2} \proj_k \hat{\rho}_{\beta E}^{1/2} e^{i\hat{H} t} \proj_k ] = P_S(t).
\end{equation}
Further, the SFF for these Kraus operators is
\begin{equation}
    K(t) = \frac{1}{D^2} \left\lvert \Tr \kraus{1}{t}\right\rvert^2 = \frac{D_E}{D^2} \left\lvert \Tr\left(e^{-i\hat{H}t} \hat{\rho}_{\beta E}^{1/2}\right)\right\rvert^2 \equiv K_{\beta}(t).
\end{equation}
Now, Eq.~\eqref{eqs:dynamicalinequlityinfinitetemp} for these Kraus operators gives $P_S(t) \geq K_{\beta}(t)$, which is Eq.~\eqref{eq:dynamicalinequalityagain} in the main text.

\subsection{Criteria for nontriviality}

We should emphasize that, unlike the $\beta=0$ case where $P_S(0) = K(0) = 1$ under Hamiltonian dynamics, for $\beta \neq 0$ their values at $t=0$ can be different. In particular, $P_S(0) = 1$, while $K_{\beta}(0) = f_{\beta}^2 \leq 1$ [see Eqs.~\eqref{eq:fidelityinfdef} and \eqref{eq:dosscramblingcriterion_gen} of the main text]. Further, as $K_{\beta}(t) \leq K_{\beta}(0)$, we can only obtain a nontrivial bound on the scrambling time from $P_S(t) \geq K_{\beta}(t)$ if
\begin{equation}
f_{\beta}^2 > P_{\scr}.
\label{eqs:nontrivialitycriterion1}
\end{equation}

In this subsection, we ask under what conditions a system may satisfy Eq.~\eqref{eqs:nontrivialitycriterion1}. These considerations are equivalent to, and expand on, the discussion following Eq.~\eqref{eq:dosscramblingcriterion_gen} of the main text. The need to consider such criteria is in contrast to the $\beta = 0$ case of Ref.~\cite{dynamicalqspeedlimit}, where as long as $P_{\scr} < 1$, some nontrivial $>0$ bound on $t_s$ always exists.

However, the tradeoff in our case still admits two kinds of universality:
\begin{enumerate}
    \item For a given initial state with a certain $f_{\beta}^2$, we get nontrivial values when $P_{\scr} < f_{\beta}^2$. This can translate to a restriction on the system size of $\mathcal{H}_S$. For example, if we are interested in maximal scrambling with $P_{\scr} \sim D_S^{-1}$, the number of qubits $N_S$ in the subsystem $\mathcal{H}_S$ must be as large as
    \begin{equation}
        N_S > 2 \log_2 \frac{1}{f_{\beta}},
    \end{equation}
    for a nontrivial scrambling time. An interesting special case is when the initial state is a pure state, $\rho_{\beta E} = \lvert \ell\rangle_E\langle \ell\rvert$, which has $f_{\beta} = D_E^{-1/2}$. Then, we get $N_S > N_E$ for a nontrivial bound for pure states.
    For any state that is not entirely pure, it follows that we get nontrivial scrambling time bounds even for some values of $N_S \leq N_E$.
    \item For a given scrambling value $P_{\scr} < 1$, the condition $f_{\beta}^2 > P_{\scr}$ restricts the entanglement of the initial state $\hat{\rho}_{\beta E}$. For example, this is because the $1/2$-order R\'{e}nyi entanglement entropy~\cite{HorodeckiEntanglementReview} of this state is determined by the fidelity:
    \begin{equation}
        S_{1/2,E} \equiv 2 \ln \left[D_E^{1/2}f_{\beta}\right].
    \end{equation}
    Thus, for a nontrivial bound, $S_{1/2,E}  > \ln\left[D_E P_{\scr}\right]$, implying that a smaller value of $P_{\scr}$ requires less entanglement in the initial state for our bound to be useful.
\end{enumerate}

\section{Derivation of a universal bound on the scrambling time}
\label{sec:scramblingtimeanddoubleexp}
In this section, we derive a quantitative refinement of a known asymptotic double exponential bound on bounded analytic functions which is originally given in terms of finite but undetermined constants~\cite{AhlforsHeins, Hayman, LinnikOstrovskii, Hayman_v2}. This refinement provides explicit values for the undetermined constants in the previous versions. As mentioned in the main text, this allows us to constrain the scrambling time even for finite dimensional systems and obtain concrete numerical bounds for $t_s$ that can potentially be tested in experiments, instead of just an asymptotic estimate in a thermodynamic limit. This fills in the technical details for the discussion around Eqs.~\eqref{eq:doubleexpbound1} through \eqref{eq:ts_universal_bound} of the main text. 

Sec.~\ref{sec:theorem_and_ts} states our quantitative version of the double-exponential bound in Theorem~\ref{thm:newdoubleexp}, and derives the scrambling time as per Eqs.~\eqref{eq:ts_scr_bound_main} and \eqref{eq:ts_universal_bound} of the main text. Sec.~\ref{sec:newdoubleexpderivation} proves the double exponential bound by conformally mapping the strip to a half-plane, based on two lemmas on analytic functions on the half-plane, closely following the methods of Ref.~\cite{Hayman_v2}.

As a prelude to our technical discussion, we provide some local power series intuition for why analyticity should constrain the decay of a bounded function. For simplicity, let an analytic $F(t)$ have a saddle point at $F(t=0) = 1$. On account of analyticity, we can write a power series expansion in a neighborhood of this point:
\begin{equation}
    F(t - i\tau) = 1 - \frac{1}{2}c_2(t-i\tau)^2 + O((t-i\tau)^3),
\end{equation}
where $c_2 = F''(0)$. We take $c_2 \in \mathbb{R}$, so that $F(i\tau) \in \mathbb{R}$ [to $O(\tau^2)$] grows as $1+c_2\tau^2/2$ away from $t=0$ while $F(t) \in \mathbb{R}$ [to $O(t^2)$] decays as $1-c_2t^2/2$; in addition, $F(t)$ then locally satisfies $\lvert F(t-i\tau)\rvert < F_R(-i\tau)$ to $O((t-i\tau)^2)$ [like in Eq.~\eqref{eq:ridgeproperty} of the main text] so it is sufficient to consider $F(-i\tau)$ for boundedness. The key restriction from analyticity here is that the same coefficient $c_2$ determines both the growth and decay. Thus, if $F(-i\tau) < F_{\max}$ in $\tau \in (0, \tau_2)$, we have
\begin{equation}
    F(t) \geq 1-(F_{\max}-1)\left(\frac{t}{\tau_2}\right)^2+O(t^3).
\end{equation}
In other words, the decay rate is bounded by $F_{\max}$. This qualitatively connects analyticity and boundedness to (local) decay rates in an intuitive example, but we emphasize that the mathematical details of our (global) double exponential bound [Eq.~\eqref{eqs:tildoseffdoubleexpthm}, or Eq.~\eqref{eq:doubleexpbound1} in the main text] and method of proof are quite different.
Mathematically inclined readers may find
closer intuition to our result through
Jensen's formula from complex analysis or the Poisson kernel formula for harmonic functions \cite{Koosis}.

\subsection{The scrambling time from a double-exponential bound on analytic functions}
\label{sec:theorem_and_ts}

\subsubsection{A quantitative theorem: analytic functions are bounded by a double-exponential in time}

The key to deriving our quantitative bound on the scrambling time is the following theorem bounding the decay of analytic functions, which quantifies the undetermined parameters in the previously known double-exponential bounds of Refs.~\cite{AhlforsHeins, Hayman, LinnikOstrovskii, Hayman_v2} (the latter being stated in Eq.~\eqref{eq:doubleexpbound1} of the main text):
\begin{theorem}[\textbf{Quantitative decay rate of an analytic function on a strip}]
\label{thm:newdoubleexp}
Let the function $F(\tcomp)$ be analytic with $\lvert F(\tcomp)\rvert < 1$ in the open strip $\lbrace \tcomp = t-i\tau \in \mathbb{C}$: $\tau \in (\tau_1,\tau_2)\rbrace$,  and continuous in the corresponding closed strip $\tau \in [\tau_1,\tau_2]$. We also require that $F(\tcomp)$ is not identically $0$ in the strip. Then for any $0 < \ell < \infty$, there is an ``exceptional set'' $\texcept \subset [0,\infty)$ of times $t$ whose length is at most
\begin{equation}
    \int_{\texcept}\diff t \leq \frac{(\tau_2-\tau_1)}{\pi}\ell,
    \label{eqs:thmexceptionalsetlength}
\end{equation}
such that $\lvert F(\tcomp)\rvert$ is at least a double exponential in $t$ everywhere in the strip except when $t$ is in the exceptional set, i.e., 
\begin{align}
    \lvert F(t-i\tau)\rvert &\geq \exp\left[-\frac{\Lambda_{\ell}}{2} \exp\left(\frac{\pi t}{\tau_2-\tau_1}\right)\right], \nonumber \\
    &\ \text{for all } t \in [0,\infty) \setminus \texcept, \text{ and } \tau \in [\tau_1,\tau_2]. \label{eqs:tildoseffdoubleexpthm}
\end{align}
Here, the parameter $ \expratemain_{\ell} \geq 0$ is given in terms of $\ell$ and $F(-i\tau)$ by $\expratemain_{\ell} = 2(\pi^2+8) (2+\ell)\exprate/\ell$, in which 
\begin{equation}
    \exprate = \inf_{\tau\in(\tau_1,\tau_2)} \left\lbrace\sec \left[\frac{\pi}{\tau_2-\tau_1} \left(\tau-\frac{\tau_2+\tau_1}{2}\right)\right]\ln \frac{1}{\lvert F(-i\tau)\rvert}\right\rbrace.
    \label{eqs:expratedef}
\end{equation}
\end{theorem}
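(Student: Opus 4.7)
The plan is to prove Theorem~\ref{thm:newdoubleexp} by conformally reducing the strip to the right half-plane and then applying two quantitative lemmas on bounded analytic functions there, adapting the methods of Ref.~\cite{Hayman_v2} while carefully tracking numerical constants.

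\textbf{Step 1: conformal reduction.} I would introduce
\begin{equation*}
w(\tcomp) = \exp\!\left[\frac{\pi}{\tau_2-\tau_1}\!\left(\tcomp + i\,\tfrac{\tau_1+\tau_2}{2}\right)\right],
\end{equation*}
which biholomorphically maps the open strip $\tau\in(\tau_1,\tau_2)$ to the right half-plane $\{\re w>0\}$: the central ray $\tau=\tau_0:=(\tau_1+\tau_2)/2$ is sent to the positive real axis with $|w|=\exp[\pi t/(\tau_2-\tau_1)]$, while each imaginary-axis point $\tcomp=-i\tau$ is sent to a point $w_\tau$ on the unit circle with $\arg w_\tau=\pi(\tau_0-\tau)/(\tau_2-\tau_1)\in(-\pi/2,\pi/2)$. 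Setting $G(w):=F(\tcomp(w))$ then gives a function analytic on $\re w\ge 0$ with $|G|\le 1$ and $G\not\equiv 0$, and the sought double-exponential bound in $t$ becomes the single-exponential bound $\log|G(w)|\ge -(\Lambda_\ell/2)|w|$. The family $\{G(w_\tau)=F(-i\tau)\}_{\tau\in(\tau_1,\tau_2)}$ supplies a continuum of interior reference values from which the $\sec(\cdot)$-factor in $\Gamma$ will emerge under optimization.

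\textbf{Step 2: two half-plane lemmas.} First, a Jensen-type zero-counting estimate: for $R\asymp|w|$, the number $N(R)$ of zeros of $G$ in a reference disk $D_R$ is bounded above by $\log[1/|G(w_\tau)|]$ divided by a geometric weight depending on the distance from $w_\tau$ to $\partial D_R$; optimizing over $\tau\in(\tau_1,\tau_2)$ converts this weight into the secant factor in Eq.~\eqref{eqs:expratedef}. Second, a Cartan-type minimum-modulus lemma: factor $G=BH$ on $D_R$ with $B$ the Blaschke product of the zeros in $D_R$ and $\log|H|$ harmonic; bound $\log|H|(w)$ from below by a half-plane Poisson integral against $\log|G|\le 0$ on the boundary — the explicit evaluation of this Poisson integral, with the reference point at $w_\tau$ on the unit circle, is where the universal constant $\pi^2+8$ should appear — and bound $\log|B|(w)$ from below outside a union of disks around the zeros $\{w_k\}$ of total weighted size at most a prescribed $\eta$ by the classical Cartan inequality $\log|B(w)|\ge -N(R)\log[eR/\eta]$ (or its half-plane variant).

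\textbf{Step 3: pullback of the exceptional set and assembly of constants.} The radial map $t\mapsto|w|=\exp[\pi t/(\tau_2-\tau_1)]$ has Jacobian $(\pi/(\tau_2-\tau_1))|w|$, so a $w$-disk of radius $\rho_k$ centered at $w_k$ projects to a $t$-interval of length $\sim 2\rho_k(\tau_2-\tau_1)/(\pi|w_k|)$. Imposing the total $t$-length constraint~\eqref{eqs:thmexceptionalsetlength} forces $\sum_k \rho_k/|w_k|\le \ell/2$; choosing the exclusion radii to scale as $\rho_k\propto|w_k|$ so that all scales of $|w|$ contribute equally produces the factor $(2+\ell)/\ell$ in $\Lambda_\ell$ (the classical Cartan tradeoff between shrinking the exceptional set and worsening the minimum-modulus bound). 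Combining with the $\pi^2+8$ prefactor from Step~2 and the infimum defining $\Gamma$ from Step~1 yields exactly $\Lambda_\ell=2(\pi^2+8)(2+\ell)\Gamma/\ell$. Extension from the central ray $\tau=\tau_0$ to all $\tau\in[\tau_1,\tau_2]$ is automatic once the Cartan bound is formulated at the general point $w(t-i\tau)$ rather than at $|w|$ on the positive axis, since the exclusion set, defined as neighborhoods of the real projections of zeros of $G$, does not depend on $\tau$.

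\textbf{Main obstacle.} The classical Ahlfors-Heins and Hayman theorems give only an asymptotic double-exponential bound with an undetermined $\Lambda$; the real work here is to make each of the Poisson, Jensen, and Cartan estimates equality-tight and to jointly optimize the reference point $\tau$, the Jensen radius $R$, and the weighted exclusion profile $\{\rho_k\}$ so that the final inequality reproduces the precise expression for $\Lambda_\ell$. The most delicate accounting is propagating a single exceptional-set budget $\ell$ through all scales of $|w|$ simultaneously, so that zeros near the origin (small $|w_k|$) and zeros far from the origin (large $|w_k|$) contribute consistently to both the $(2+\ell)/\ell$ factor and to the total $t$-length bound.
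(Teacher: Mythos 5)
Your Step 1 matches the paper exactly, but from there the routes diverge and your version has real gaps. The paper does not build a Jensen/Blaschke/Cartan decomposition at all: it imports Theorem 7.27 of Hayman (a quantitative harmonic-measure estimate for subharmonic functions on the half-plane, applied to $u=\ln|G|$), which directly yields
\begin{equation*}
\ln\frac{1}{|G(z)|}\ \geq\ \frac{4}{(2+L)(\pi^2+8)}\int_{V_\lambda}\diff r'\,\frac{x}{x^2+(r'+|y|)^2}\,\lambda r',
\end{equation*}
where $V_\lambda$ is \emph{defined} as the set of radii on which $\inf_\theta|G(re^{i\theta})|\le e^{-\lambda r}$ and $L$ is its logarithmic length. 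Evaluating this at a point $e^{i\theta}$ of the unit semicircle (the image of the segment $t=0$) and using $x^2+(r'+|y|)^2\le 4(r')^2$ gives $\lambda\le \frac{2+L}{L}(\pi^2+8)\sec\theta\,\ln\frac{1}{|F(-i\tau)|}$; choosing $\lambda$ equal to the right-hand side with $L$ replaced by $\ell$ then \emph{forces} $L\le\ell$ by monotonicity of $(2+L)/L$. That self-consistency argument is where $(2+\ell)/\ell$ comes from, and $\pi^2+8$ is inherited verbatim from Hayman's theorem. Your plan asserts that these same two constants will "appear" from an explicit Poisson integral and from the Cartan tradeoff, but gives no derivation, and there is no reason your decomposition would land on those exact values.

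More importantly, two steps of your outline would fail as stated. First, the zero-counting estimate: for a bounded analytic function on the half-plane, Jensen/Blaschke controls only the weighted sum $\sum_k \re w_k/|w_k+1|^2$, not the raw count $N(R)$ of zeros in a disk — zeros may accumulate at the boundary $\re w=0$, making $N(R)$ infinite, so the Cartan bound $\log|B(w)|\ge -N(R)\log(eR/\eta)$ can be vacuous. Second, the exceptional-set budget: if $G$ has infinitely many zeros (generic for the functions of interest, e.g.\ finite sums of exponentials oscillate and vanish infinitely often as $t\to\infty$), then exclusion radii $\rho_k\propto|w_k|$ with fixed proportionality give each zero a fixed $t$-length, so $\sum_k\rho_k/|w_k|$ diverges and the constraint $\le\ell/2$ cannot be met; shrinking the $\rho_k$ instead blows up the Cartan lower bound. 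You flag this as "the most delicate accounting" but do not resolve it. Finally, your exceptional set (neighborhoods of zeros) is not the right object: a bounded analytic function can be small far from any zero because of its singular inner factor $e^{-cw}$, so excluding zero-neighborhoods alone does not establish the lower bound off the exceptional set unless the singular and outer factors are also controlled quantitatively from the single interior datum $|F(-i\tau)|$ — which is precisely what Hayman's Theorem 7.27 packages and what your sketch leaves open.
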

\begin{proof}
 See Sec.~\ref{sec:newdoubleexpderivation}. The proof closely follows the techniques used to prove Theorem 7.32 in Ref.~\cite{Hayman_v2}, which is related to the present theorem, but unlike Eq.~\eqref{eqs:tildoseffdoubleexpthm} does not quantitatively determine the constants in Eq.~\eqref{eq:doubleexpbound1} of the main text.
\end{proof}

Note that if we set $\ell = 0$, the expression in the exponent of Eq.~\eqref{eqs:tildoseffdoubleexpthm} diverges, so we can never guarantee that the exceptional set vanishes for the double exponential form in this equation. However, we can make it as small as desired.

\subsubsection{Applying the bound to \texorpdfstring{$\tildoseff(t)$}{N\~{}b(t)}}

Here, we derive Eqs.~\eqref{eq:doubleexpbound1} through \eqref{eq:exprateresult} of the main text from Theorem~\ref{thm:newdoubleexp}. As noted in Eq.~\eqref{eq:tildoseffdef} of the main text, $\tildoseff(\tcomp)$ is a weighted sum of exponentials with non-negative weights,
\begin{equation}
    \tildoseff(t-i\tau) = \frac{D_E^{1/2}}{f_{\beta}D}\sum_{n=1}^{D} \langle E_n\rvert\hat{\rho}_{\beta E}^{1/2}\lvert E_n\rangle e^{-iE_n t-E_n \tau},
    \label{eqs:tildoseffdef}
\end{equation}
and is consequently analytic and bounded on any strip $t\in\mathbb{R}$, $\tau \in (\tau_1,\tau_2)$, as well as continuous up to the boundary of the strip. As we want to focus on evolution in real time, we take $\tau=0$ to lie in the strip, $\tau_1 \leq 0 \leq \tau_2$. Further, $\tildoseff(0) = 1$ making $\tildoseff(\tcomp)$ not identically $0$. Thus, $\tildoseff(\tcomp)$ already satisfies most of the properties required by Theorem~\ref{thm:newdoubleexp} for $F(\tcomp)$, except not necessarily that $\lvert F(\tcomp)\rvert < 1$ in the interior of the strip.

To normalize $\tildoseff(\tcomp)$ to satisfy this last property, we consider the maximum of $\tildoseff(t)$ in the closed strip $\tau \in [\tau_1,\tau_2]$. From Eq.~\eqref{eqs:tildoseffdef} we have $Z_{\beta}(\tau) \equiv \tildoseff(-i\tau) \in [0,\infty)$ and $\lvert \tildoseff(t-i\tau)\rvert \leq Z_{\beta}(\tau)$ [Eq.~\eqref{eq:ridgeproperty} of the main text], which amount to the ``ridge property''~\cite{LinnikOstrovskii} of Fourier transforms of probability distributions --- implying that the maximum of $\tildoseff(\tcomp)$ occurs on the imaginary axis $t=0$.

Further, $Z_{\beta}(\tau)$ is also a convex function of $\tau$ (see, e.g., Ref.~\cite{LinnikOstrovskii}); in our case, this is because
\begin{equation}
    \frac{\diff^2 Z_{\beta}}{\diff \tau^2}(\tau) = \frac{D_E^{1/2}}{f_{\beta}D}\sum_{n} E_n^2 \langle E_n\rvert\hat{\rho}_{\beta E}^{1/2}\lvert E_n\rangle e^{-E_n \tau} \geq 0.
    \label{eqs:Zconvexity}
\end{equation}
% Additionally, $Z_{\beta}(\tau)$ is strictly convex if the energies are nonzero (recall that the partition function is sensitive to the zero of energy): if there is a point $\tau_0 \in \mathbb{R}$ such that
% \begin{equation}
%     \frac{\diff^2 Z_{\beta}}{\diff \tau^2}(\tau_0) = 0,
%     \label{eqs:Zinflectionpoint}
% \end{equation}
% then $E_n = 0,\ \forall\ n$. This is because Eq.~\eqref{eqs:Zconvexity} is a sum of non-negative terms. Eq.~\eqref{eqs:Zinflectionpoint} implies that each of these terms vanishes at $\tau_0$, which is only possible if $E_n = 0$.

Here, there are two distinct cases of interest:
\begin{enumerate}
    \item All energy levels with $\langle E_m\rvert\hat{\rho}_{\beta E}^{1/2}\lvert E_m\rangle \neq 0$ have $E_m = 0$. In this (trivial) case, $\tildoseff(\tcomp) = 1$ everywhere, automatically satisfying the double exponential bound [Eq.~\eqref{eq:doubleexpbound1} of the main text] for any choice of decay rate $\Lambda \geq 0$.
    \item Otherwise, the partition function is strictly convex, $\diff^2 Z_{\beta}(\tau)/\diff \tau^2 > 0$ in $[\tau_1,\tau_2]$ (as all terms in Eq.~\eqref{eqs:Zconvexity} are nonnegative, and not all of them vanish). In this case, the maximum
    \begin{equation}
        Z_{\max} \equiv \max_{\tau \in [\tau_1,\tau_2]} Z_{\beta}(\tau)
    \end{equation}
    is attained exclusively on either $\tau=\tau_1$ or $\tau=\tau_2$, with $Z_{\beta}(\tau) < Z_{\max}$ in $\tau \in (\tau_1,\tau_2)$. Then, we can define
    \begin{equation}
        F(\tcomp) \equiv \frac{\tildoseff(\tcomp)}{Z_{\max}}, \text{ satisfying } \lvert F(t-i\tau)\rvert < 1 \text{ in } \tau \in (\tau_1,\tau_2).
        \label{eqs:FunctionAndTildoseff}
    \end{equation}
    Theorem~\ref{thm:newdoubleexp} now applies to this $F(\tcomp)$, and we obtain the quantitative double exponential bound,
    \begin{equation}
        \lvert \tildoseff(t)\rvert \geq Z_{\max} \exp\left[-\frac{\expratemain_{\ell}}{2}\exp\left(\frac{\pi t}{\tau_2-\tau_1}\right)\right], \text { for } t\geq 0: t \notin \texcept.
    \end{equation}
    corresponding to Eqs.~\eqref{eq:doubleexpbound1}, \eqref{eq:exceptionalsetlength} and \eqref{eq:exprateresult} of the main text (respectively from Eqs.~\eqref{eqs:tildoseffdoubleexpthm}, \eqref{eqs:thmexceptionalsetlength} and \eqref{eqs:expratedef} of the Theorem, with Eq.~\eqref{eq:exprateresult} obtained by extending the right hand side of Eq.~\eqref{eqs:expratedef} to be continuous at $\tau_1$, $\tau_2$ to replace the infimum with a minimum over $[\tau_1,\tau_2]$).
\end{enumerate}

A key point to emphasize is that these results specify the previously undetermined constants such as $\Lambda_{\ell}$ entirely in terms of the behavior of $\tildoseff(t-i\tau)$ at the initial (real) time $t=0$, or equivalently, the partition function $Z_{\beta}(\tau) = \tildoseff(-i\tau)$. This is due to taking the physical viewpoint that we should not expect to know aspects of the late-time dynamics of the system [e.g., $\tildoseff(t>0)$] when constraining this very dynamics. From a purely mathematical standpoint, we can potentially make $\exprate$ smaller (i.e., obtain a tighter bound) in some cases if we already know some aspects of the $t>0$ behavior 
(by extending Lemma~\ref{lem:improvedhayman} in Sec.~\ref{sec:newdoubleexpderivation} to $\lvert z\rvert \geq 1$; see also \cite{inpreparation}).

Finally, to provide some physical intuition for the decay rate $\exprate$, we can carry out the minimization in Eq.~\eqref{eqs:expratedef} by finding a stationary point $\tau_0$ (if it exists). Differentiation yields the implicit equation:
\begin{equation}
    \frac{\pi}{\tau_2-\tau_1} \tan\left[\frac{\pi}{\tau_2-\tau_1} \left(\tau_0-\frac{\tau_2+\tau_1}{2}\right)\right] \ln \frac{Z_{\max}}{Z_{\beta}(\tau_0)} = \left.\frac{\diff \ln Z_{\beta}(\tau)}{\diff \tau}\right\rvert_{\tau=\tau_0}.
\end{equation}
If the stationary point does not exist, the minimum occurs at $\tau_{\max} \in \lbrace \tau_1,\tau_2\rbrace$ where $Z_{\beta}(\tau_{\max}) = Z_{\max}$. Whether $\tau_0$ exists or not, the value of $\exprate$ is in all cases conveniently expressed, using familiar thermodynamic relations~\cite{LLStatMech, ReichlStatMech}, in terms of the expectation value of the energy $\mathcal{E}_{\beta}(\tau) \equiv -\diff \ln Z_{\beta}(\tau)/\diff\tau$ and the excess free energy $\Delta \mathcal{F}_{\beta}(\tau) \equiv -\tau^{-1}[\ln Z_{\beta}(\tau)-\ln Z_{\max}]$ at Euclidean time $\tau$ [with $\Delta \mathcal{F}_{\beta}(\tau_{\max}) = 0$]. We get:
\begin{equation}
    \exprate = \sqrt{\tau_0^2 [\Delta \mathcal{F}_{\beta}(\tau_0)]^2 + \frac{(\tau_2-\tau_1)^2}{\pi^2}[\mathcal{E}_{\beta}(\tau_0)]^2}.
\end{equation}
This expresses the decay rate in terms of thermodynamic quantities. The location of the point $\tau_0$ and the significance of the (free) energy at this point are system-dependent, and it appears that we cannot obtain further universal insights from this expression.

\subsubsection{Derivation of the scrambling time}

We can obtain a bound on the scrambling time $t_s$ by using Eqs.~\eqref{eqs:tildoseffdoubleexpthm} and \eqref{eqs:FunctionAndTildoseff} in Eq.~\eqref{eq:dosscramblingcriterion_gen} of the main text, which states that (implicitly with $t < T_D$, the time scale of quantum recurrences~\cite{QuantumRecurrences})
\begin{equation}
\left\lvert \tildoseff(t > t_s)\right\rvert^2 \leq \frac{P_{\scr}}{f_{\beta}^2},
    \label{eqs:dosscramblingcriterion_gen} 
\end{equation}
However, we should carefully account for the exceptional set $\texcept$. Any part of the exceptional set in the region $t\geq t_s$ does not affect the scrambling time, as it refers to when $\lvert \tildoseff(t)\rvert$ itself already satisfies our scrambling criterion and whether it violates Eq.~\eqref{eqs:tildoseffdoubleexpthm} becomes immaterial. However, the length of the portion of the exceptional set before $t_s$ is uncertain: it can be $0$ at minimum, or as much as the maximum allowed length $(\tau_2-\tau_1)\ell/\pi$ of the exceptional set. So we can only constrain the scrambling time to within this maximum possible length. This corresponds to requiring $\lvert \tildoseff(t_s+(\tau_2-\tau_1)\ell/\pi)\rvert^2$ to be less than the scrambling value. Assuming $0 \in [\tau_1,\tau_2]$ and using Eq.~\eqref{eqs:tildoseffdoubleexpthm} with \eqref{eqs:FunctionAndTildoseff}, we get the inequality
\begin{equation}
    Z_{\max}\exp\left[-\frac{(2+\ell)}{\ell}(\pi^2+8) \exprate \exp\left(\frac{\pi t_s}{\tau_2-\tau_1}+\ell\right)\right] \leq \frac{P_{\scr}^{1/2}}{f_{\beta}}.
\end{equation} 
Rearranging this equation, we obtain
\begin{equation}
    t_s \geq \frac{\tau_2-\tau_1}{\pi}\left\lbrace\ln\left[\frac{1}{2}\ln\left(\frac{f_{\beta}^2 Z_{\max}^2}{P_{\scr}}\right)\right]-\ln\left[\frac{(2+\ell)}{\ell}(\pi^2+8) \exprate\right]-\ell\right\rbrace.
    \label{eqs:ts_general_form_supplement}
\end{equation}
This leads to Eq.~\eqref{eq:ts_universal_bound} of the main text.
Since the choice of $\ell$ is arbitrary, it is best to choose a value that maximizes the bound for $t_s$ (though this is not crucial in a thermodynamic limit, where this term is usually subleading). The corresponding maximization of $\ln[\ell/(2+\ell)]-\ell$ over $\ell > 0$ gives
\begin{equation}
    \ell =  \sqrt{3}-1.
\end{equation}
Using this value of $\ell$ in Eq.~\eqref{eqs:ts_general_form_supplement} yields \eqref{eq:ts_scr_bound_main} of the main text.

\subsection{Proof of \texorpdfstring{Theorem~\ref{thm:newdoubleexp}}{Theorem 1} on the quantitative decay of analytic functions}
\label{sec:newdoubleexpderivation}

\subsubsection{Conformal map from a strip to a half-plane}

In order to prove Theorem~\ref{thm:newdoubleexp}, it will be convenient to work on the right \emph{half-plane} instead of the strip, since several formulas from complex analysis are simpler on the half-plane. There is a conformal map between the two regions
shown in Fig.~\ref{fig:map} and defined below, which can translate results on the right half-plane to results on the strip $\tau_1<-\im\tcomp<\tau_2$.
\begin{figure}[!htb]
\centering
% \begin{tikzpicture}[line width=1pt]
% \foreach \row [count=\rc] in {-2,...,2}{
% \pgfmathsetmacro\rcolor{\rc*20}
% \draw[color=blue!\rcolor] (-3,\row/3)--(3,\row/3);
% }
% \foreach \col [count=\cc] in {-2,...,2}{
% \pgfmathsetmacro\ccolor{\cc*20}
% \draw[color=orange!\ccolor] (\col,-2/3)--(\col,2/3);
% }
% \node[left] at (-3,2/3) {$-i\tau_1$};
% \node[left] at (-3,-2/3) {$-i\tau_2$};
% \node[below] at (0,-2/3) {$0$};
% \node at (0,-2) {$\tau_1<\tau< \tau_2$};
% \node at (0,-1.6) {$\tcomp=t-i\tau$};

% \begin{scope}[xshift=4.25cm]
% \draw[->] (-.5,.5)--(2,.5);
% \node[above] at (.75,.5) {$\tcomp\mapsto e^{\frac{\pi}{\tau_2-\tau_1}(\tcomp+i\frac{\tau_1+\tau_2}{2})}$};
% \draw[<-] (-.5,-.5)--(2,-.5);
% \node[below] at (.75,-.5) {$\frac{\tau_2-\tau_1}{\pi}\log z-i\frac{\tau_1+\tau_2}{2}\;$\rotatebox[origin=c]{180}{$\mapsto$}\;\raisebox{-.5mm}{$z$}};
% \end{scope}

% \begin{scope}[xshift=7.5cm]
% \foreach \row [count=\rc] in {-2,...,2}{
% \pgfmathsetmacro\rcolor{\rc*20}
% \draw[color=blue!\rcolor] (0,0)--({45*\row}:3);
% }
% \foreach \col [count=\cc] in {-2,...,2}{
% \pgfmathsetmacro\ccolor{\cc*20}
% \draw[color=orange!\ccolor] (0,-{exp(\col/2)}) arc (-90:90:{exp(\col/2)});
% }
% \node[left] at (0,0) {$0$};
% \node[left] at (0,-1) {$-i$};
% \node[left] at (0,1) {$i$};
% \node at (2.5,-2.75) {$\re z>0$};
% \end{scope}
% \end{tikzpicture}
\includegraphics{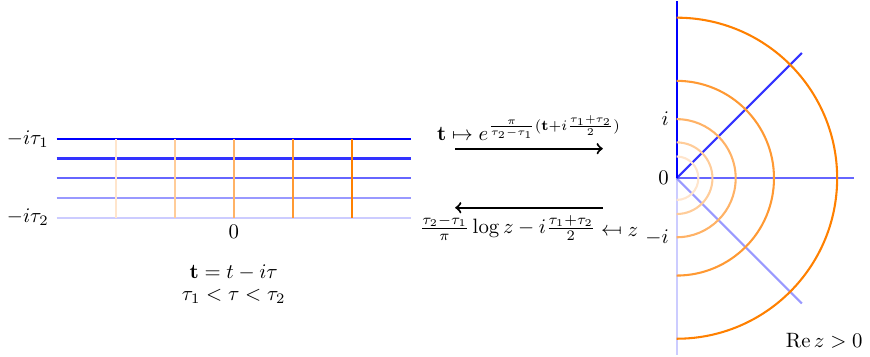}
\caption{Conformal map between the strip $\{\tcomp=t-i\tau:\tau_1<\tau<\tau_2\}$ and right half-plane $\re z>0$, with the boundaries included. Horizontal lines in the strip are mapped to radial rays $re^{i\theta}$, $r>0$, in the half-plane.
}\label{fig:map}
\end{figure}

Specifically, we define the variables ($z\in \mathbb{C}$; $r,\theta,x,y \in \mathbb{R}$) via the transformation
\begin{equation}
    z = re^{i\theta} = x+iy = \exp\left[\pi \frac{\tcomp+i\tfrac{\tau_2+\tau_1}{2}}{\tau_2-\tau_1}\right].
\end{equation}
Then for $(-\im \tcomp) \in [\tau_1,\tau_2]$,
\begin{align}
    r &= \exp\left[\frac{\pi \re \tcomp}{\tau_2-\tau_1}\right] \in (0,\infty), \qquad
    \theta = \pi \frac{\im \tcomp + \tfrac{\tau_2+\tau_1}{2}}{\tau_2-\tau_1} \in \left[-\frac{\pi}{2},\frac{\pi}{2}\right],
    \label{eqs:rtheta}
\end{align}
which is seen to correspond to the closed half-plane $\{z=x+iy:x\ge0,y\in\R\}$ 
(with $z=0$ corresponding to values $-\infty-i\tau$ in the strip). We will freely switch between these different coordinates for the right half-plane in what follows.

\subsubsection{Constraint on regions of double-exponential or faster decay in \texorpdfstring{$t$}{t}}

On account of Eq.~\eqref{eqs:rtheta}, a double exponential decay in $t$ corresponds to an exponential decay in $r$. Our goal is now to constrain the size of regions $r \in \rexcept_\exprateparam$ in which an analytic function $F(z)$ may decay faster than the exponential rate $\exp(-\exprateparam r)$.
This is accomplished using the following Lemma, which is an immediate consequence of Theorem 7.27 in Ref.~\cite{Hayman_v2}:
\begin{lemma}[\textbf{Constraining regions of exponential or faster decay}]
\label{lem:hayman727}
Let $F(z)$ satisfying $\lvert F(z)\rvert < 1$ be an analytic function that is not identically $0$ in the open half-plane $x>0$. For any $\exprateparam > 0$, let $\rexcept_{\exprateparam} \subset (0, \infty)$ be a measurable set such that
\begin{equation}
    \inf_{\theta \in \left(-\frac{\pi}{2},\frac{\pi}{2}\right)}\lvert F(r e^{i\theta})\rvert \leq e^{-\exprateparam r},\ 
    \;\; \text{for all } r\in V_{\exprateparam}.
    \label{eqs:expboundandset}
\end{equation}
The set $V_{\exprateparam}$ consists of radii $r$ in which $\lvert F(r e^{i\theta})\rvert$ is at most $e^{-\exprateparam r}$ for some value of $\theta$ in the right-half-plane. Then, if we define the logarithmic length of $\rexcept_{\exprateparam}$ in $r$ (corresponding in the strip to its actual length in $t$, up to constants) by
\begin{equation}
    L[\rexcept_{\exprateparam}] \equiv \int_{V_{\exprateparam}} \frac{\diff r}{r},
    \label{eqs:rexceptlength}
\end{equation}
the following bound is satisfied everywhere on the half-plane $x>0$:
\begin{equation}
    \ln \frac{1}{\lvert F(z)\rvert} \geq \frac{4}{(2+L[\rexcept_{\exprateparam}])(\pi^2+8)}\int_{\rexcept_{\exprateparam}} \diff r' \frac{x}{x^2 + (r'+\lvert y\rvert)^2} \exprateparam r'.
    \label{eq:Hayman_uz_inequality}
\end{equation}
\end{lemma}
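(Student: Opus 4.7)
The plan is to derive this lemma as a direct specialization of Theorem 7.27 of Ref.~\cite{Hayman_v2}, which provides a quantitative Poisson-type minorant for nonnegative superharmonic functions on the right half-plane. The argument has three phases: recasting the hypothesis as a pointwise lower bound, invoking the general theorem, and tracking constants.

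First, I would set $u(z) \equiv -\ln\lvert F(z)\rvert$. Since $F$ is analytic, not identically zero, and satisfies $\lvert F(z)\rvert < 1$ on $\{x>0\}$, $u$ is a nonnegative superharmonic function on the open half-plane, finite wherever $F \neq 0$. For each $r \in \rexcept_{\exprateparam}$, the hypothesis $\inf_{\theta\in(-\pi/2,\pi/2)}\lvert F(re^{i\theta})\rvert \le e^{-\exprateparam r}$, combined with continuity of $\lvert F\rvert$ on the closed half-plane, selects some $\theta(r)\in[-\pi/2,\pi/2]$ at which $u(re^{i\theta(r)}) \ge \exprateparam r$. Thus the hypothesis becomes a pointwise lower bound on $u$ at exactly one location per radius in $\rexcept_{\exprateparam}$, which is the form of input that Hayman 7.27 consumes.

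Second, I would apply Hayman 7.27 to propagate these local lower bounds to a global Poisson-type minorant at any interior point $z = x+iy$. The half-plane Poisson kernel at a boundary point $is$ is $P(z,is) = (1/\pi)\, x/[x^2+(y-s)^2]$. Because $\theta(r)$ is not known a priori, the conclusion of the theorem must be robust against the worst placement of the boundary contribution: by monotonicity of $P(z,is)$ in $\lvert y - s\rvert$, the minimum is attained at $s = -r'\sgn(y)$, producing the factor $x/[x^2+(r'+\lvert y\rvert)^2]$ appearing in \eqref{eq:Hayman_uz_inequality}. This worst-case symmetrization is what replaces any explicit angular dependence by a kernel depending only on $\lvert y\rvert$.

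Third, I would track Hayman's normalization to produce the prefactor $4/[(2+L[\rexcept_{\exprateparam}])(\pi^2+8)]$: the factor $2$ reflects the two halves of the imaginary axis accessible to the worst case, the constant $\pi^2+8$ arises from a universal Poisson integral identity on the half-plane built into Hayman's proof, and the $L[\rexcept_{\exprateparam}]$ in the denominator captures the dilution of per-radius information as the logarithmic length of the control set grows. The main obstacle is not conceptual but bookkeeping: Hayman 7.27 is phrased for subharmonic functions on sectors with a fixed normalization, and the work lies in verifying that no sharpness is lost when translating between the sector and half-plane formulations and that the worst-case angular minimization is genuinely executed inside Hayman's proof rather than sacrificed at an intermediate step.
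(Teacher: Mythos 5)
Your proposal is correct and follows essentially the same route as the paper: the paper's entire proof is to set $u(z)=\ln\lvert F(z)\rvert$ (subharmonic; your $-\ln\lvert F\rvert$ superharmonic convention is equivalent) and $f(r)=\exprateparam r$ in Theorem 7.27 of Ref.~\cite{Hayman_v2}, exactly the specialization you describe. The additional commentary on the provenance of the constants $\pi^2+8$ and the worst-case angular placement is plausible bookkeeping internal to Hayman's theorem, but the paper does not re-derive it and neither do you.
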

\begin{proof}
    This follows from setting $u(z) = \ln \lvert F(z)\rvert$ and $f(r) = \exprateparam r$ in Theorem 7.27 of Ref.~\cite{Hayman_v2} (see also Theorem 7.32 of Ref.~\cite{Hayman_v2}).
\end{proof}

To provide some relation to more familiar expressions, we note that the integral on the right hand side is related to electrostatics (i.e., solutions of the Laplace equation) on the half-plane~\cite{AhlforsHeins, Hayman_v2}. In particular,
the Poisson kernel formula~\cite{Koosis, Jackson} gives the following solution to Laplace's equation for the electrostatic potential:
\begin{equation}\label{eq:phi}
    \phi(x,y) = \frac{1}{\pi}\int_{-\infty}^{\infty} \diff y' \frac{x}{x^2 + (y'-y)^2} \phi(0,y'),
\end{equation}
which we will use with the boundary condition $\phi(0,y) = \exprateparam y$ for $y \in \rexcept_{\exprateparam}$ and $0$ otherwise.
With this context, Eq.~\eqref{eq:Hayman_uz_inequality} can be expressed for $y \leq 0$ as follows~\cite{Hayman_v2}:
\begin{equation}
    \ln \frac{1}{\lvert F(x+iy)\rvert} \geq \frac{4\pi}{(2+L[\rexcept_{\exprateparam}])(\pi^2+8)}\phi(x,y).
    \label{eq:hayman_uz_electrostatic}
\end{equation}
The right hand side should be considered~\cite{AhlforsHeins} an indirect measure of the size of the set $\rexcept_{\exprateparam}$.

We make two quick remarks concerning the mathematical context of the above lemma. First, in Ref.~\cite{Hayman_v2}, the result is stated more generally for {subharmonic} functions $u$, which includes $u(z)=\ln \lvert F(z)\rvert$ for  analytic $F(z)$ as a special case. 
The subharmonicity suggests why one might expect to see a Poisson kernel-like formula in
Eq.~\eqref{eq:Hayman_uz_inequality}.
Second, the actual Poisson kernel formula (see e.g. Ref.~\cite[\S IIIG.2]{Koosis}) for $\ln|F(z)|$ can be used in place of Lemma~\ref{lem:hayman727} to obtain quantitative decay rate estimates like in Theorem~\ref{thm:newdoubleexp}, but only on the \emph{boundary} of the strip. 
In \cite{inpreparation}, we explore the utility of these Poisson kernel boundary estimates in other situations.
However, in the present setting, since we will often work in the \emph{interior} of a strip, using Lemma~\ref{lem:hayman727}, which provides decay rate estimates on the entirety of the strip, will produce better $t_s$ bounds whenever $\tau_1,\tau_2\ne0$. This is because larger strips, which are not restricted to have a boundary at $\tau=0$, will increase the prefactor $c\beta$ measuring the strip width in our bound on $t_s$; in the appropriate thermodynamic limit [Eq.~\eqref{eq:ts_entropy_form} of the main text or \eqref{eqs:ts_entropy_thermo} here], this is the \textit{only} parameter available to improve the tightness of the bound [provided that $Z_\beta(\tau) = O(1)$ continues to hold in the larger strip].

\subsubsection{Direct constraint on the decay parameter $\exprateparam$}

Lemma~\ref{lem:hayman727} can be used to derive the following Lemma (closely related to Theorem 7.32 in Ref.~\cite{Hayman_v2}):

\begin{lemma}[\textbf{Constraining the decay parameter}]
\label{lem:improvedhayman}
Let $F(z)$ be as in Lemma~\ref{lem:hayman727}. For any $\lambda > 0$ and $R>1$, let $\rexcept_{\exprateparam} \subseteq [1,R]$ be a set of exponential-or-faster decay satisfying Eq.~\eqref{eqs:expboundandset}, and therefore subject to Lemma~\ref{lem:hayman727}.
Then the
logarithmic length $L[\rexcept_\exprateparam] \leq \ln R < \infty$ of $\rexcept_{\exprateparam}$ is related to the decay rate $\exprateparam$ by
\begin{equation}
    \exprateparam \leq \frac{(2+L[\rexcept_{\exprateparam}])}{L[\rexcept_{\exprateparam}]}(\pi^2+8)\left\lbrace\frac{1}{\cos\theta}\ln \frac{1}{\lvert F(e^{i\theta})\rvert}\right\rbrace,
    \label{eqs:exceptionalsetlengthinequality1}
\end{equation}
where $z = e^{i\theta}:$ $\theta \in (-\pi/2, \pi/2)$ is any point on the unit semi-circle in the right half-plane (which corresponds to the vertical segment
$t=0$ in the original strip).
\end{lemma}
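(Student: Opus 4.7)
The plan is to apply Lemma~\ref{lem:hayman727} at the single probe point $z = e^{i\theta}$ on the unit semi-circle in the right half-plane (corresponding to the segment $t=0$ in the strip) and reduce the resulting Poisson-kernel-type integral on the right to the logarithmic length $L[\rexcept_{\exprateparam}]$, up to a universal constant.

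Evaluating Eq.~\eqref{eq:Hayman_uz_inequality} at $z = e^{i\theta}$, so that $x = \cos\theta$ and $|y| = |\sin\theta|$, directly gives
\begin{equation*}
\ln\frac{1}{\lvert F(e^{i\theta})\rvert} \geq \frac{4}{(2 + L[\rexcept_{\exprateparam}])(\pi^2+8)} \int_{\rexcept_{\exprateparam}} \frac{\cos\theta \, \exprateparam r'}{\cos^2\theta + (r' + \lvert\sin\theta\rvert)^2}\, dr'.
\end{equation*}
To recover exactly $\exprateparam \cos\theta \, L[\rexcept_{\exprateparam}]$ on the right-hand side after integration, the key step is the pointwise lower bound
\begin{equation*}
\frac{r'}{\cos^2\theta + (r' + \lvert\sin\theta\rvert)^2} \geq \frac{1}{4 r'}, \qquad r' \in \rexcept_{\exprateparam} \subseteq [1, R],
\end{equation*}
which, after clearing denominators and using $\cos^2\theta = 1 - \sin^2\theta$, reduces to the single scalar inequality $3 (r')^2 - 2 r' \lvert\sin\theta\rvert - 1 \geq 0$. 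This is the main algebraic obstacle, but it is mild: the positive root of the quadratic in $r'$ is $\frac{\lvert\sin\theta\rvert + \sqrt{\sin^2\theta + 3}}{3}$, which is monotone in $\lvert\sin\theta\rvert \in [0,1]$ and attains its maximum value $1$ exactly at $\lvert\sin\theta\rvert = 1$. Hence the inequality holds for every $r' \geq 1$, precisely matching the hypothesis $\rexcept_\exprateparam \subseteq [1,R]$.

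Integrating the pointwise bound over $\rexcept_{\exprateparam}$ and recognizing $\int_{\rexcept_\exprateparam} dr'/r' = L[\rexcept_\exprateparam]$ yields
\begin{equation*}
\ln\frac{1}{\lvert F(e^{i\theta})\rvert} \geq \frac{\exprateparam \cos\theta \, L[\rexcept_{\exprateparam}]}{(2 + L[\rexcept_{\exprateparam}])(\pi^2+8)},
\end{equation*}
and solving for $\exprateparam$ gives exactly Eq.~\eqref{eqs:exceptionalsetlengthinequality1}. The finiteness assertion $L[\rexcept_{\exprateparam}] \leq \ln R$ is immediate from $\rexcept_{\exprateparam} \subseteq [1, R]$ and the definition of $L$ in Eq.~\eqref{eqs:rexceptlength}.

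I expect the only genuine subtlety to be the calibration of the probe radius: choosing $|z| = 1$ is what makes the quadratic inequality tight at the corner $(r', \lvert\sin\theta\rvert) = (1, 1)$ and matches the lower endpoint of the admissible radii in $\rexcept_\exprateparam$. A probe radius $\rho \neq 1$ would either require shifting the range of $\rexcept_\exprateparam$ or degrade the universal constant $4$ obtained in the pointwise bound. Once this calibration is made, the rest of the argument is a direct application of Lemma~\ref{lem:hayman727} plus elementary algebra.
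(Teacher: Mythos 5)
Your proposal is correct and follows essentially the same route as the paper: evaluate Lemma~\ref{lem:hayman727} at $z=e^{i\theta}$, bound the Poisson-kernel denominator by $4(r')^2$ for $r'\geq 1$ (the paper does this termwise via $\lvert z\rvert,\lvert y\rvert\leq r'$, you by solving the equivalent quadratic $3(r')^2-2r'\lvert\sin\theta\rvert-1\geq 0$), and integrate to recover $L[\rexcept_{\exprateparam}]$. The two verifications of the pointwise bound are interchangeable, so there is nothing substantive to distinguish your argument from the paper's.
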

\begin{proof}
Our proof will closely follow that of Theorem 7.32 in Ref.~\cite{Hayman_v2}.  In Eq.~\eqref{eq:Hayman_uz_inequality}, we set $z=e^{i\theta}$, and specifically consider only these points in the integral:
\begin{equation}
    \int_{\rexcept_{\exprateparam}} \diff r' \frac{x}{x^2 + (r'+\lvert y\rvert)^2} \exprateparam r'.
\end{equation}
In this integral, $\lvert x\rvert, \lvert y\rvert \leq 1$ due to the restriction to $\lvert z\rvert=1$, and further $r' \geq 1$ as $\rexcept_{\exprateparam} \subseteq [1,\infty)$ by assumption; this implies $\lvert z\rvert, \lvert y\rvert \leq r'$. The denominator thus satisfies the inequality
\begin{equation}
    x^2 + (r'+\lvert y\rvert)^2 = \lvert z\rvert^2+(r')^2 + 2(r')\lvert y\rvert \leq 4(r')^2.
\end{equation}
Using this inequality in Eq.~\eqref{eq:Hayman_uz_inequality} with $x=\cos\theta$ gives
\begin{equation}
    \ln \frac{1}{\lvert F(e^{i\theta})\rvert} \geq \frac{\lambda \cos \theta}{(2+L[\rexcept_{\exprateparam}])(\pi^2+8)}\int_{\rexcept_{\exprateparam}} \frac{\diff r'}{r'}.
\end{equation}
Rearranging factors and noting that the integral on the right hand side equals $L[\rexcept_{\exprateparam}]$ by Eq.~\eqref{eqs:rexceptlength}, we get Eq.~\eqref{eqs:exceptionalsetlengthinequality1}.
\end{proof}

The above result differs from Theorem 7.32 of Ref.~\cite{Hayman_v2}, which determines the limit of $r^{-1}\ln \lvert F(r e^{i\theta})\rvert$ as $r\to\infty$, outside an exceptional set of some (unspecified) finite logarithmic length. However, Eq.~\eqref{eqs:exceptionalsetlengthinequality1} provides an explicit quantitative constraint on the logarithmic length of the exceptional set $\rexcept_\exprateparam$ pertaining to a specific $\exprateparam$. 
As an aside, we note that Theorem 7.32 of Ref.~\cite{Hayman_v2} has connections with the Phragm\'{e}n-Lindel\"{o}f principle~\cite{AhlforsHeins, Hayman, Hayman_v2} (which generalizes the maximum modulus principle to e.g. unbounded domains). 
This principle also plays a role in the conjectured bound on chaos in out-of-time-ordered correlators (OTOCs)~\cite{MSSotocBound}, albeit in a quantitatively very different form.

\subsubsection{Completing the proof}

Only a few quick observations remain to obtain Theorem~\ref{thm:newdoubleexp} from Eq.~\eqref{eqs:exceptionalsetlengthinequality1}. We note that the function $(2+\ell)/\ell$ is monotonically decreasing with $\ell$. Further, as $F(\tcomp)$ is analytic and not identically zero in the open strip by assumption, the identity theorem~\cite{SteinShakarchi2, BrownChurchill} implies that $F(e^{i\theta})$ is not identically zero in $\theta \in (-\pi/2,\pi/2)$. Thus, for any given $0 < \ell < \infty$, if we choose
\begin{equation}
    \exprateparam = \frac{(2+\ell)}{\ell}(\pi^2+8)\left\lbrace\frac{1}{\cos\theta}\ln \frac{1}{\lvert F(e^{i\theta})\rvert}\right\rbrace
    \label{eqs:exprateparamchoice}
\end{equation}
for any $\theta \in (-\pi/2,\pi/2)$ such that $F(e^{i\theta}) \neq 0$, then it follows from Eq.~\eqref{eqs:exceptionalsetlengthinequality1} that
\begin{equation}
    \frac{2+\ell}{\ell} \leq \frac{(2+L[\rexcept_{\exprateparam}])}{L[\rexcept_{\exprateparam}]},\ \;\;\text{ which implies } L[\rexcept_{\exprateparam}] \leq \ell.
    \label{eqs:lengthboundproof}
\end{equation}
Eq.~\eqref{eqs:lengthboundproof} holds for any value of $R$, so we can take $R\to\infty$ in Lemma~\ref{lem:improvedhayman}. This implies that given any $\ell > 0$, by choosing $\exprateparam$ according to Eq.~\eqref{eqs:exprateparamchoice}, we can guarantee that any set $\rexcept_{\exprateparam} \in [1,\infty)$ in which $\lvert F(r e^{i\theta})\rvert \leq \exp(-\exprateparam r)$ for some $\theta$ must have a logarithmic length no greater than $\ell$.

Given that the choice of $\theta$ in Eq.~\eqref{eqs:exprateparamchoice} is arbitrary, we will get a smaller value of $\lambda$, and therefore a tighter exponential in Eq.~\eqref{eqs:expboundandset}, by minimizing the right hand side of Eq.~\eqref{eqs:exprateparamchoice} with respect to $\theta$, while Eq.~\eqref{eqs:lengthboundproof} continues to hold. At this stage, transforming back to the strip $\tcomp = t-i\tau$ gives (as the unit semi-circle corresponds to the vertical $t=0$ segment in the strip)
\begin{equation}
    \lambda = \frac{(2+\ell)}{\ell}(\pi^2+8)\inf_{\tau\in(\tau_1,\tau_2)} \left\lbrace\sec \left[\frac{\pi}{\tau_2-\tau_1} \left(\tau-\frac{\tau_2+\tau_1}{2}\right)\right]\ln \frac{1}{\lvert F(-i\tau)\rvert}\right\rbrace,
    \label{eqs:expparaminterior}
\end{equation}
as an appropriate choice of $\exprateparam$ so that any set $\texcept \in [0,\infty)$ (the transformed version of $\rexcept_{\exprateparam}$) such that
\begin{align}
    \inf_{\tau \in (\tau_1,\tau_2)}\lvert F(t-i\tau)\rvert \leq \exp\left[-\exprateparam \exp\left(\frac{\pi t}{\tau_2-\tau_1}\right)\right],\ \;\; \text{for all } t \in \texcept,
    \label{eqs:expboundandset2}
\end{align}
must have a length of at most
\begin{equation}
    \int_{\texcept}\diff t \leq \frac{(\tau_2-\tau_1)}{\pi}\ell.
    \label{eqs:lengthrestriction2}
\end{equation}
Finally, as $F(\tcomp)$ is continuous up to the boundary of the strip, Eq.~\eqref{eqs:expboundandset2} can also be extended to $\tau = \tau_1,\tau_2$. This completes the proof of Theorem~\ref{thm:newdoubleexp}.
\qed

\section{Formal examples of nearly fast scramblers}
\label{sec:nearlyfastscramblers}
In this section, we construct formal examples of systems that nearly saturate our bound $t_s \gtrsim (c\beta/\pi)\ln S_{2,S}$ on fast scrambling in the thermodynamic limit. Specifically, we construct the \textit{regularized} density of states of these systems at an assumed inverse temperature $\beta$, assuming that a suitable initial state $\hat{\rho}_{\beta E}$ exists, and show that with sufficiently random eigenstates $t_s \lesssim (2c\beta/\pi)\ln S_{2,S}$ for these systems. This corresponds to the discussion of ``nearly fast scramblers'' after Eq.~\eqref{eq:ts_universal_bound} in the main text.

\subsection{Regularized density of states and its Fourier transform}

In this section, we construct simple examples of continuum limits of $\tildoseff(t)$ that attain a double exponential decay rate for $t\in\R$, subject to the requirement that its inverse Fourier transform (the regularized density of states) satisfies $\doseff(E)\ge0$ and $\int_{-\infty}^\infty\doseff(E)\,dE=1$. 
(See also \cite{Ostrovskii}, \cite[p.35, Appendix II]{LinnikOstrovskii} for a different class of examples.) These examples demonstrate that the double exponential decay rate in 
Eq.~\eqref{eq:doubleexpbound1}
of the main text is essentially optimal (up to a constant in the exponent), even with the constraint that $\doseff(E)\ge0$.
In \cite{inpreparation}, we also formally verify that this leads to similar scrambling time bounds for corresponding finite dimensional systems. 

First consider the analytic function 
\begin{align}\label{eqs:g-def}
g(\tcomp)=\exp(-e^{\frac{\pi}{\cbeta}\tcomp})\exp(-e^{-\frac{\pi}{\cbeta}\tcomp}),\end{align} for $\tcomp=t-i\tau\in\C$ and a fixed $\cbeta>0$, which has absolute value
\begin{align}\label{eqs:g-norm}
|g(\tcomp)| &= \exp\left(-e^{\frac{\pi t}{\cbeta}}\cos\Big(\frac{\pi \tau}{\cbeta}\Big)\right)\exp\left(-e^{-\frac{\pi t}{\cbeta}}\cos\Big(\frac{\pi \tau}{\cbeta}\Big)\right).
\end{align}
The function $g$ is thus bounded in the strip $|\im \tcomp|\le \cbeta/2$, but does not meet the requirement that its inverse Fourier transform is nonnegative. (One can either check this numerically, or note that $g$ is everywhere analytic, but not maximized at $t=0$ when $\tau=\cbeta$, and so even its normalized version cannot be the Fourier transform of a nonnegative density \cite[\S II.3]{LinnikOstrovskii}.)

However, recall the convolution of two functions, 
$g_1*g_2(t) \equiv \int_{-\infty}^\infty g_1(y)g_2(t-y)\,dy,$
behaves with the inverse Fourier transform $g^\vee(E)\equiv\frac{1}{2\pi}\int_{-\infty}^\infty g(t)e^{itE}\,dt$ as
\begin{align}\label{eqn:conv}
(g_1*g_2)^\vee(E) &= (2\pi) g_1^\vee(E)g_2^\vee(E).
\end{align}
Therefore the normalized self-convolution 
\begin{align}\tildoseff(t)=\frac{g*g(t)}{\|g\|_2^2},\;\;\text{where }\; \|g\|_2^2=\int_{-\infty}^\infty g(t)^2\,dt,\label{eqs:nearlyfastscrambler}\end{align} has nonnegative inverse Fourier transform 
\begin{align}
\doseff(E)=\frac{2\pi}{\|g\|_2^2}(g^\vee(E))^2\ge 0,
\end{align}
which is also properly normalized since $\int_{-\infty}^\infty \doseff(x)\,dx=\tildoseff(0)=1$.
Additionally, $\tildoseff(t)$ has similar decay rate on the real line as $g$:
\begin{align*}
\tildoseff(t)= \int_{-\infty}^\infty g(y)g(t-y)\, dy &\le \bigg(\int_{|y|\ge t/2} + \int_{|y-t|\ge t/2}\bigg) [g(y)g(t-y)]\, dy\\\numberthis\label{eqn:fg-ineq}
&\le 2g(t/2) \int_{-\infty}^\infty g(y)\,dy,
\end{align*}
so that
$
|\tildoseff(t)| \le C g(t/2),
$
where $C=2\int_{-\infty}^\infty g(t)\,dt$ is independent of $t$.
The rapid decay of $g$ for $|\im \tcomp|<\cbeta/2$ also ensures that $\tildoseff(\tcomp)$ is defined, bounded, and integrable along horizontal lines in any smaller strip $|\im\tcomp|\le \cbeta/2-\epsilon$.

The double exponential decay rate $\propto\exp(-e^{\frac{\pi}{2\cbeta}|t|})$ can thus be achieved by $\tildoseff(t)$ analytic and bounded in a strip $|\im \tcomp|<\cbeta/2-\epsilon$ for any $\varepsilon>0$, 
with $\doseff(E)\ge0$ and normalized. 
Verifying the conditions in e.g. \cite[Theorem IX.14]{ReedSimon2} shows the density of states $\doseff(E)$ satisfies an exponential decay bound.
Plots demonstrating the decay of these $\tildoseff(t)$ and $\doseff(E)$ are shown in Fig.~\ref{fig:f-ft}.

Surprisingly, $\doseff(E)$ and $\tildoseff(t)$ look very similar to a Gaussian distribution expected for generic many-body systems~\cite{DeltaStar2,ShenkerThouless}; for $c\beta/\pi = 1$ as in Fig.~\ref{fig:f-ft}, the deviation of the latter from a Gaussian becomes appreciable (more than $10\%$ of the Gaussian) only around $t\approx 1.9$, at which $\lvert \tildoseff(1.9)\rvert^2 \approx 0.014 \in (2^{-7},2^{-6})$. This suggests that the difference in quantum dynamics between the examples constructed here and generic many-body systems (in terms of the contribution from the energy eigenvalues) may become significant only for systems of $N \gg 6$ qubits [so that the $O(D^{-1})$ fluctuations in $\lvert \tildoseff(t)\rvert^2$, which is a sum of $D$ phase factors, are negligible compared to its value at this time].
%Numerically, \tildoseff(t)/Gaussian(t) \approx 0.89 < 0.9 after t=1.9. Here, tildoseff(1.9)^2 \approx 0.0140, and the difference Gaussian(1.9)^2-\tildoseff(1.9)^2 \approx  0.0036.

\begin{figure}[htb]
\centering
\includegraphics[width=\textwidth]{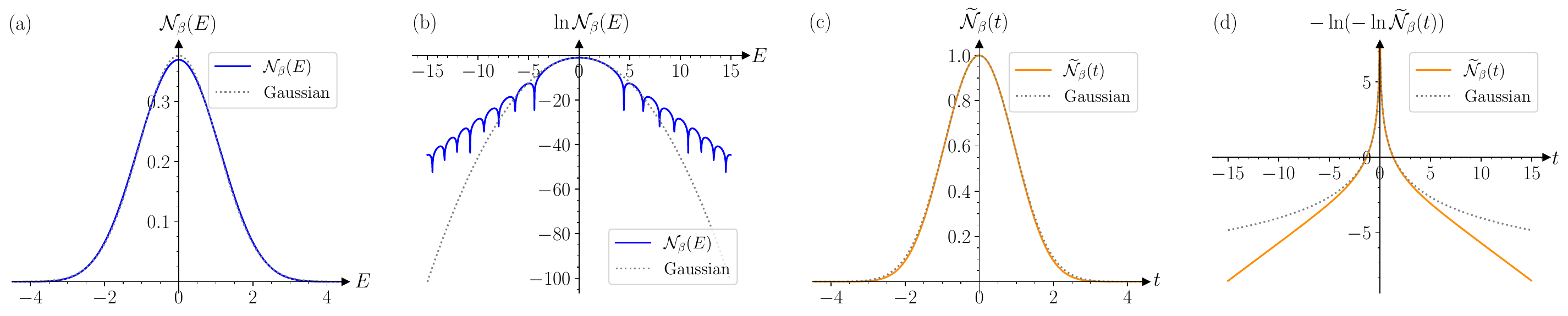}
\caption{(a)-(b) Plots of $\doseff(E)=\frac{2\pi}{\|g\|_2^2}(g^\vee(E))^2$, for $g(t)=\exp(-e^t)\exp(-e^{-t})$ which corresponds to $\cbeta=\pi$ in Eq.~\eqref{eqs:g-def}, compared against a Gaussian probability density with the same variance. 
The spacing between energy samples is $\Delta E=0.025$.
While the plot of $\doseff(E)$ in (a) looks similar to a Gaussian, higher precision integration using the {mpmath} library \cite{mpmath} and a logarithmic scale in (b) demonstrate the exponential (linear in $\ln \doseff(E)$) decay, rather than Gaussian decay. 
(c)-(d) Plots of $\tildoseff(t)=\frac{g*g(t)}{\|g\|_2^2}$ compared to a Gaussian $e^{-t^2\sigma^2/2}$, where $\sigma^2$ is the variance from (a). 
The spacing between time samples is $\Delta t=0.05$.
While the plot of $\tildoseff(t)$ in (c) again looks similar to a Gaussian, higher precision integration and a doubly-logarithmic scale in (d) demonstrate the much faster double-exponential (linear in $-\ln(-\ln \tildoseff(t))$) decay of $\tildoseff(t)$.
} \label{fig:f-ft}
\end{figure}

\subsection{Upper bound on the scrambling time for these examples}

The constraint $\lvert \tildoseff(t)\rvert \leq Cg(t/2)$ leads to a double exponential decay of the regularized SFF for large $t$:
\begin{equation}
    K_{\beta}(t\to\infty) \lesssim f_{\beta}^2 C^2\exp\left[-\exp\left(\frac{\pi t}{2c\beta}\right)\right]
    \label{eqs:SFFnearlyfastscrambler}
\end{equation}
It is also known that for systems with energy eigenstates that are ``sufficiently random'' [i.e., behave as if sampled from the Haar distribution, including but not restricted to Haar random energy eigenstates in the computational basis], the expectation values of observables $\proj_\alpha$ in initial states $\hat{\rho}_k(0)$ [Eqs.~\eqref{eq:initstatedef} and \eqref{eq:projdef} of the main text] essentially track the SFF to decay to their (maximally) scrambled values~\cite{dynamicalqspeedlimit, Reimann2016, ChaosComplexityRMT, CotlerHunterJones2}:
\begin{equation}
    \Tr[\hat{\rho}_k(t) \proj_\alpha] = \frac{1+O(D_E^{-1/2})}{D_S}+\left\lbrace\Tr[\hat{\rho}_k(0) \proj_\alpha]-\frac{1}{D_S}\right\rbrace K_{\beta}(t).
\end{equation}
Thus, all these expectation values are guaranteed to maximally scramble after a time $t_s$, i.e.,
\begin{equation}
    \Tr[\hat{\rho}_k(t)\proj_\alpha] \sim \frac{1}{D_S},\ \text{ for } t : t_s < t < T_D,
\end{equation}
if $K_{\beta}(t > t_s) = o(D_S^{-1})$~\cite{dynamicalqspeedlimit}, e.g., $K_{\beta}(t > t_s) \leq D_S^{-1-\varepsilon}$ for any $\varepsilon > 0$.  Using Eq.~\eqref{eqs:SFFnearlyfastscrambler} with this condition gives for the scrambling time:
\begin{equation}
    t_s \lesssim \frac{2 c \beta}{\pi} \ln \left[\ln \left(f_{\beta}^2 C^2 D_S^{1+\varepsilon}\right)\right].
\end{equation}
Recalling that $S_{2,S} \sim N_S \ln 2$ for maximal entanglement (for $N_S < N_E$), and further assuming an initial state such that $\lvert \ln (C f_\beta)\rvert = o(N_S)$, we get a leading order scrambling time of
$t_s \lesssim (2 c\beta/\pi)\ln S_{2,S}$.
We conclude that the systems constructed in this section nearly saturate the logarthmic-in-entanglement-entropy bound of Eq.~\eqref{eqs:ts_entropy_thermo} [Eq.~\eqref{eq:ts_entropy_form} in the main text], if the energy eigenbasis is sufficiently random with respect to the computational basis.

\end{document}